\theoremstyle{plain}
\newtheorem{thm}{\protect\theoremname}
\theoremstyle{definition}
\newtheorem{defn}[thm]{\protect\definitionname}
\theoremstyle{plain}
\newtheorem{lem}[thm]{\protect\lemmaname}
\theoremstyle{plain}
\newtheorem{prop}[thm]{\protect\propositionname}
\theoremstyle{definition}
\newtheorem{example}[thm]{\protect\examplename}
\theoremstyle{remark}
\newtheorem{rem}[thm]{\protect\remarkname}
\providecommand{\definitionname}{Definition}
\providecommand{\examplename}{Example}
\providecommand{\lemmaname}{Lemma}
\providecommand{\propositionname}{Proposition}
\providecommand{\remarkname}{Remark}
\providecommand{\theoremname}{Theorem}
\begin{document}
\title{Completeness of\\
 Unbounded Best-First Game Algorithms}
\author{Quentin Cohen-Solal\\
LAMSADE, Universit\'{e} Paris-Dauphine, PSL, CNRS, France\\
quentin.cohen-solal@dauphine.psl.eu}

\maketitle
\global\long\def\et{\ \wedge\ }%

\global\long\def\terminal{\mathrm{t}}%

\global\long\def\joueur{\mathrm{j}}%

\global\long\def\joueurUn{\mathrm{1}}%

\global\long\def\joueurDeux{\mathrm{2}}%

\global\long\def\fbin{\mathrm{f_{b}}}%

\global\long\def\fterminal{f_{\mathrm{t}}}%

\global\long\def\fadapt{f_{\theta}}%

\global\long\def\actions{\mathcal{A}}%

\global\long\def\etats{\mathcal{S}}%

\global\long\def\Spartiel{\mathcal{S}_{\mathrm{p}}}%

\global\long\def\ubfm{\mathrm{UBFM}}%

\global\long\def\ubfms{\ubfm_{\mathrm{\mathrm{s}}}}%

\global\long\def\umaxn{\ubfm^{n}}%

\global\long\def\maxn{\mathrm{Max}^{n}}%

\global\long\def\umaxns{\ubfms^{n}}%

\global\long\def\descente{\mathrm{descent}}%

\global\long\def\descenten{\descente^{n}}%

\global\long\def\fbinn{\fbin}%

\global\long\def\fterminaln{\fterminal}%

\global\long\def\fadaptn{\fadapt}%

\global\long\def\ou{\,\vee\,}%

\global\long\def\ubfm{\mathrm{UBFM}}%
\global\long\def\argmax{\operatorname*{\mathrm{arg\,max}}}%
\global\long\def\argmin{\operatorname*{\mathrm{arg\,min}}}%
\global\long\def\liste#1#2{\left\{  #1\,|\,#2\right\}  }%
\global\long\def\minimum{\operatorname*{\mathrm{min}}}%

\begin{abstract}
In this article, we prove the completeness of the following game search
algorithms: unbounded best-first minimax with completion and descent
with completion, i.e. we show that, with enough time, they find the
best game strategy. We then generalize these two algorithms in the
context of perfect information multiplayer games. We show that these
generalizations are also complete: they find one of the equilibrium
points.
\end{abstract}

\section{Introduction}

Unbounded Best-First Minimax~\cite{korf1996best,cohen2020minimax,cohen2019apprendre}
is an old game tree search algorithm that is used very little. Unlike
classic Minimax, this one performs a search at an unbounded depth,
making it possible to explore the game tree non-uniformly and thus
to anticipate many turns in advance with regard to the part of the
game tree assumed the most interesting. Most recently, it has been
successfully applied in the context of \emph{reinforcement learning
without knowledge}.~\cite{2020learning,cohen2020minimax}. More precisely,
for many games, at least combined with a search algorithm dedicated
to reinforcement learning, called \emph{Descent}, the unbounded minimax
gives better results than the state of the art of reinforcement learning
without knowledge (i.e. the AlphaZero algorithm~\cite{silver2018general},
based on Monte Carlo Tree Search (MCTS)~\cite{Coulom06,browne2012survey},
another search algorithm, which is stochastic). The Descent algorithm
is a modification of Unbounded Best-first Minimax that builds the
search tree more in depth in order to more effectively propagate endgame
information during learning. Unbounded Minimax and Descent in their
basic version are however not complete in certain contexts like that
of reinforcement learning, that is to say they do not allow to calculate
an equilibrium point of the complete game tree (i.e. a winning strategy
for each player) subject to having sufficient calculation time \cite{2020learning}.
For example, the basic algorithms, not using the fact that certain
states are resolved, will always choose to play during the exploration
a resolved state rather than an unresolved state less well evaluated
(whereas the evaluation of a unresolved state is only an estimate,
so the unresolved state may be better). This then blocks the exploration
and prevents, when this scenario occurs, calculating the minimax value
and therefore determining the best action to play.  In \cite{2020learning},
a modification of Descent and Unbounded Minimax, which is called completion,
has been proposed. 

We show in this article on the one hand that the completion makes
the two algorithms complete (Section~\ref{sec:deux-joueurs}). These
two algorithms being limited to two-player games, we propose in this
article, for each of these two algorithms, two possible generalizations
to the multi-player framework, i.e. to the $\maxn$ framework~\cite{luckhart1986algorithmic,korf1991multi,sturtevant2003last}
the equivalent of Minimax in the multi-player framework. The multiplayer
algorithms that we propose are therefore unbounded versions and variants
of the algorithm $\maxn$ (which searches at a fixed depth). In addition,
we show that these new algorithms are also complete (Section~\ref{sec:multi-joueurs}).

Note that it has been proven that the standard MCTS is complete~\cite{kocsis2006bandit,kocsis2006improved}
and that it has also been generalized to the multi-player framework~\cite{zerbel2019multiagent,petosa2019multiplayer,baier2020guiding}.
Several other search algorithms for multiplayer games have also been
proposed~\cite{sturtevant2000pruning,schadd2011best,nijssen2012overview,fridenfalk2014n,baier2020opponent},
but they are all at fixed depth.

\section{Completeness in Perfect Two Player Games \label{sec:deux-joueurs}}

We start by defining the two player games that we consider in this
article. Then, we recall the Uunbounded Best-First Minimax and Descent
algorithms using the completion technique. Finally, we show that these
two algorithms are indeed complete. 

\subsection{Perfect Two Player Games }

In this section, we are interested in two-player games with perfect
information (games without hidden information or chance where players
take turns playing), that is to say formally: 
\begin{defn}
A perfect two-player game is a tuple $\left(S,\actions,\joueur,\fbin\right)$
where
\begin{itemize}
\item $\left(\etats,\actions\right)$ is a finite directed acyclic graph,
\item $\joueur$ is a function from $\etats$ to $\left\{ 1,2\right\} $,
\item $\fbin$ is a function from $\liste{s\in\etats}{\terminal\left(s\right)}$
to $\left\{ -1,0,1\right\} $, where :
\begin{itemize}
\item $\terminal$ is a predicate such that $\terminal\left(s\right)$ is
true if and only if $\left|\actions\left(s\right)\right|=0$ ;
\item $\actions\left(s\right)$ is the set defined by $\liste{s'\in\etats}{\left(s,s'\right)\in\actions}$,
for all $s\in\etats$.
\end{itemize}
\end{itemize}
\end{defn}

The set $\etats$ is the set of game states. $\actions$ encodes the
actions of the game : $\actions\left(s\right)$ is the set of states
reachable by an action from $s$. The function $\joueur$ indicates,
for each state, the number of the player whose turn it is, i.e. the
player who must play. The predicate $\terminal\left(s\right)$ indicates
if $s$is a terminal state (i.e. an end-of-game state). Let $s\in\etats$
such that $\terminal\left(s\right)$ is true, the value $\fbin\left(s\right)$
is the payout for the first player in the terminal state $s$ (the
gain for the second player is $-\fbin\left(s\right)$ ; we are in
the context of zero-sum games, non-zero-sum games are included in
the multiplayer generalization introduced in the next section). We
have $\fbin\left(s\right)=1$ if the first player wins, $\fbin\left(s\right)=0$
in the event of a draw and $\fbin\left(s\right)=-1$ if the first
player loses. 

The terminal evaluation $\fbin$ is often not very informative about
the quality of a game. A more expressive terminal evaluation function,
with values in $\mathbb{R}$, can be used to favor some games over
others, which can improve the level of play. We denote such functions
by $\fterminal$. For example, in score games, $\fterminal(s)$ maybe
the score of the endgame $s$ (see \cite{2020learning} for other
terminal evaluation functions). In order to guide the search, it is
necessary to be able to evaluate non-terminal states. To do this,
an evaluation function from $\etats$ to $\mathbb{R}$, denoted by
$\fadapt(s)$, is used. A good evaluation function $\fadapt(s)$ provides
for example an approximation of the minimax value of $s$. Such a
function $\fadapt(s)$ can be determined by reinforcement learning,
using $\fterminal(s)$ as ``reinforcement heuristic'' \cite{2020learning}.

\subsection{Algorithms}

\subsubsection{Unbounded Best-First Minimax with Completion }

Unbounded (Best-First) Minimax (noted more succinctly $\ubfm$) is
an algorithm which builds a (partial) tree of the game to then decide
which is the best action to play given the current knowledge about
the game (i.e. given the partial tree of the game). Each state $s$
of the partial game tree is associated with three values. The first
value, the completion value $c\left(s\right)$, indicates the exact
minimax value of $s$ compared to the classic gain of the game (if
it is not known $c\left(s\right)=0$). The (exact) minimax value of
a state $s$ with respect to a certain terminal evaluation function
(such as the classic game gain) is the terminal evaluation of the
end-of-game state obtained by starting from $s$ where the players
play optimally (i.e. the first player maximizes the end-of-game value
and the second player minimizes it ; each by having complete knowledge
of the full game tree). The second value, the heuristic evaluation
$v\left(s\right)$, is an estimate of the true minimax value of $s$
with respect to a certain reinforcement heuristic (a terminal evaluation
function, at least, more expressive than the classic gain of the game).
More precisely, the heuristic evaluation $v\left(s\right)$ is the
minimax value of $s$ in the partial game tree where the terminal
leaves of the tree are evaluated by the reinforcement heuristic and
the other leaves by an adaptive evaluation function (learned, for
example by reinforcement, to estimate the minimax value corresponding
to the reinforcement heuristic). Finally, the third value, the resolution
value $r\left(s\right)$, indicates whether the state $s$ is (weakly)
resolved (case $r\left(s\right)=1$) or not (case $r\left(s\right)=0$).
A terminal state is resolved. A non-terminal leaf is not resolved.
An internal state is resolved if the best child is a winning resolved
state for the current player or if all of its children are resolved
($c\left(s\right)$ is then exact). The algorithm iteratively builds
the partial tree of the game by extending each time the best sequence
of unresolved states (in a state $s$ the first player (resp. second
player) chooses the state $s'$ which maximizes (resp. minimizes)
the lexicographically ordered pair $\left(c\left(s\right),v\left(s\right)\right)$).
In other words, it adds the child states of the principal variation
of the partial game tree deprived of resolved states. An iteration
of Unbounded Minimax is described in Algorithm~\ref{alg:ubfms_search}.
This iteration is performed (from a game state) as long as there is
some search time left ($\tau$ : search time) or until this state
is resolved (and therefore its minimax value is determined, as we
will prove later).

\begin{algorithm}[!bh]
\DontPrintSemicolon\SetAlgoNoEnd

\SetKwFunction{completedbestaction}{completed\_best\_action}\SetKwFunction{completedbestactiondual}{completed\_best\_action\_dual}\SetKwFunction{backupresolution}{backup\_resolution}
\SetKwProg{myproc}{Function}{}{}

\myproc{\completedbestaction{$s$, $A$}}{

\eIf{$\joueur\left(s\right)=\joueurUn$}{return ${\displaystyle \argmax_{s'\in A}\left(c\left(s'\right),v\left(s'\right),n\left(s,s'\right)\right)}$\;}{return
${\displaystyle \argmin_{s'\in A}\left(c\left(s'\right),v\left(s'\right),-n\left(s,s'\right)\right)}$\;}

}

\;

\myproc{\completedbestactiondual{$s$, $A$}}{

\eIf{$\joueur\left(s\right)=\joueurUn$}{return ${\displaystyle \argmax_{s'\in A}\left(c\left(s'\right),v\left(s'\right),-n\left(s,s'\right)\right)}$\;}{return
${\displaystyle \argmin_{s'\in A}\left(c\left(s'\right),v\left(s'\right),n\left(s,s'\right)\right)}$\;}

}

\;

\myproc{\backupresolution{$s$}}{

\eIf{$\left|c\left(s\right)\right|=1$}{return $1$ \;}{return
$\minimum{}_{s'\in\actions\left(s\right)}r\left(s'\right)$\;}

}

\;

\protect\protect

\caption{Definition of the algorithms completed\_best\_action($s$, $A$),
which computes the \emph{a priori }best child state by using completion
from a set of child states $A$, and backup\_resolution($s$), which
updates the resolution of $s$ from its child states. \label{alg:descente-completion-suite}}
\end{algorithm}

\begin{algorithm}[!bh]
\DontPrintSemicolon\SetAlgoNoEnd

\SetKwFunction{completionubfmsiteration}{$\ubfm$\_iteration}\SetKwProg{myproc}{Function}{}{}

\myproc{\completionubfmsiteration{$s$, $\Spartiel$, $T$, $\fadapt$,
$\fterminal$}}{

\eIf{$\terminal\left(s\right)$}{

$\Spartiel\leftarrow\Spartiel\cup\{s\}$\;

$r\left(s\right),c\left(s\right),v\left(s\right)\leftarrow1,\fbin\left(s\right),\fterminal\left(s\right)$

}{

\eIf{$s\notin\Spartiel$}{

$\Spartiel\leftarrow\Spartiel\cup\{s\}$\;

\ForEach{$s'\in\actions\left(s\right)$}{

\eIf{$\terminal\left(s'\right)$}{

$\Spartiel\leftarrow\Spartiel\cup\{s'\}$\;

$r\left(s'\right),c\left(s'\right),v\left(s'\right)\leftarrow1,\fbin\left(s'\right),\fterminal\left(s'\right)$\;

}{

\If{$s'\notin\Spartiel$}{

$r\left(s'\right),c\left(s'\right),v\left(s'\right)\leftarrow0,0,\fadapt\left(s'\right)$\;

}

}

}

}{

$A\leftarrow\liste{s'\in\actions\left(s\right)}{r\left(s'\right)=0}$

\If{$\left|A\right|>0$}{

$s'\leftarrow$ \completedbestactiondual{$s$, $A$}\;

$n(s,s')\leftarrow n(s,s')+1$\;

\completionubfmsiteration{$s'$, $\Spartiel$, $T$, $\fadapt$,
$\fterminal$}}\;

$s'\leftarrow$ \completedbestaction{$s$, $\actions\left(s\right)$}\;

$c(s),v(s)\leftarrow c\left(s'\right),v\left(s'\right)$\;

$r\left(s\right)\leftarrow$ \backupresolution{$s$}\;

}}

}

\;

\protect\protect

\caption{\emph{$\protect\ubfm$} tree search algorithm with completion (see
Section~\ref{sec:deux-joueurs} for the definitions of symbols ;
see Algorithm~\ref{alg:descente-completion-suite} for the definitions
of completed\_best\_action($s$) and backup\_resolution($s$)). Note:
$T=(v,c,r)$.\label{alg:ubfms_search}}
\end{algorithm}

\subsubsection{Descent with Completion}

Descent is a variant of Unbounded Minimax. Its difference is that
it will extend the best sequence of unresolved game states until it
reaches a terminal state or a resolved state. So unlike Unbounded
Minimax, it can add the children of several states at each iteration.
It explores the game tree in a different order, much more in depth
first (this exploration priority is interesting from a learning point
of view~\cite{2020learning}). An iteration of Descent is described
in Algorithm~\ref{alg:descente-completion}.

\begin{algorithm}[!bh]
\DontPrintSemicolon\SetAlgoNoEnd

\SetKwFunction{completiondescenteiteration}{$\descente$\_iteration}
\SetKwProg{myproc}{Function}{}{}

\myproc{\completiondescenteiteration{$s$, $\Spartiel$, $T$,
$\fadapt$, $\fterminal$}}{

\eIf{$\terminal\left(s\right)$}{

$\Spartiel\leftarrow\Spartiel\cup\{s\}$\;

$r\left(s\right),c\left(s\right),v\left(s\right)\leftarrow1,\fbin\left(s\right),\fterminal\left(s\right)$

}{

\If{$s\notin\Spartiel$}{

$\Spartiel\leftarrow\Spartiel\cup\{s\}$\;

\ForEach{$s'\in\actions\left(s\right)$}{

\eIf{$\terminal\left(s'\right)$}{

$\Spartiel\leftarrow\Spartiel\cup\{s'\}$\;

$r\left(s'\right),c\left(s'\right),v\left(s'\right)\leftarrow1,\fbin\left(s'\right),\fterminal\left(s'\right)$\;

}{

\If{$s'\notin\Spartiel$}{

$r\left(s'\right),c\left(s'\right),v\left(s'\right)\leftarrow0,0,\fadapt\left(s'\right)$\;

}

}

}

$s'\leftarrow$ \completedbestaction{$s$, $\actions\left(s\right)$}\;

$c\left(s\right),v\left(s\right)\leftarrow c\left(s'\right),v\left(s'\right)$\;

$r\left(s\right)\leftarrow$ \backupresolution{$s$}\;

}

\If{$r\left(s\right)=0$}{

$A\leftarrow\liste{s'\in\actions\left(s\right)}{r\left(s'\right)=0}$

$s'\leftarrow$ \completedbestactiondual{$s$, $A$}\;

$n(s,s')\leftarrow n(s,s')+1$\;

\completiondescenteiteration{$s'$ , $\Spartiel$, $T$, $\fadapt$,
$\fterminal$}\;

$s'\leftarrow$ \completedbestaction{$s$, $\actions\left(s\right)$}\;

$c\left(s\right),v\left(s\right)\leftarrow c\left(s'\right),v\left(s'\right)$\;

$r\left(s\right)\leftarrow$ \backupresolution{$s$}\;

}

}

}

\;

\protect\protect

\caption{\emph{Descent} tree search algorithm with completion (see Section~\ref{sec:deux-joueurs}
for the definitions of symbols and Algorithm~\ref{alg:descente-completion-suite}
for the definitions of completed\_best\_action($s$) and backup\_resolution($s$)).
Note: $T=(v,c,r)$.\label{alg:descente-completion}}
\end{algorithm}

\subsubsection{Proof of Completeness }

We now show that the two algorithms are complete. We start by formalizing
precisely what we call being complete in the context of these two
algorithms and then we establish the completeness result. 
\begin{defn}
The minimax value of a state $s\in\etats$ (compared to the terminal
evaluation $\fbin$ in the complete game tree $\etats$) is the value
$M\left(s\right)$ recursively defined by 
\[
M\left(s\right)=\begin{cases}
\max_{s'\in\actions\left(s\right)}M\left(s'\right) & \text{si }\lnot\terminal\left(s\right)\et\joueur\left(s\right)=\joueurUn\\
\min_{s'\in\actions\left(s\right)}M\left(s'\right) & \text{si }\lnot\terminal\left(s\right)\et\joueur\left(s\right)=\joueurDeux\\
\fbin\left(s\right) & \text{si }\terminal\left(s\right)
\end{cases}
\]
\end{defn}

\begin{lem}
\label{lem:Si-avant}If $r\left(s\right)=1$ before an iteration of
$\descente$ (resp. $\ubfm$) then after the iteration , $r\left(s\right)$
and $c\left(s\right)$ have not changed.
\end{lem}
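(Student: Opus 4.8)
The plan is to verify, by direct inspection of the two algorithms, that the fields $c(s)$ and $r(s)$ of a state that is already resolved are simply never rewritten during an iteration; the lemma then follows trivially. The argument has three ingredients: pin down which states an iteration can modify, observe that resolved states lie outside that set, and dispose of the one borderline case (the state on which the iteration is run).

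\textbf{Which states an iteration can touch.} A single call recursively follows a chain $s_0\to s_1\to\cdots\to s_k$, where $s_0$ is the state on which the iteration is run and, for each $j$, $s_{j+1}=\mathtt{completed\_best\_action\_dual}(s_j,A_j)$ with $A_j=\{s'\in\actions(s_j)\mid r(s')=0\}$ the set of \emph{unresolved} children of $s_j$. Inspecting Algorithms~\ref{alg:ubfms_search} and~\ref{alg:descente-completion} (with Algorithm~\ref{alg:descente-completion-suite}), every assignment to a field $n,v,c$ or $r$ made during the iteration either occurs inside the child-initialisation loop and is guarded by $s'\notin\Spartiel$, or concerns the current recursion argument $s_j$, or concerns the pair $(s_j,s_{j+1})$. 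Hence the only states whose $c$ or $r$ can change are those freshly inserted into $\Spartiel$ together with the states $s_0,\dots,s_k$ of the chain.

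\textbf{Resolved states are off the chain.} Suppose $r(s)=1$ before the iteration. Then $s\in\Spartiel$ already, so $s$ is not one of the freshly inserted states; and for $j\ge 1$ we have $s_j\in A_{j-1}$, so $r(s_j)=0$ at the moment $s_j$ is selected, whence $s\neq s_j$. There remains $s=s_0$: in an actual run $s_0$ is always unresolved (an iteration is performed only while the current state is not resolved), so this case does not occur. For robustness one also checks directly that even a standalone call on a resolved root preserves $c(s_0),r(s_0)$. For $\descente$ this is immediate: with $s_0\in\Spartiel$ the block guarded by $s_0\notin\Spartiel$ is skipped and with $r(s_0)=1$ the block guarded by $r(s_0)=0$ is skipped, so $s_0$ is not written. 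For $\ubfm$, if $s_0$ is terminal the recomputation $r(s_0),c(s_0)\leftarrow 1,\fbin(s_0)$ is idempotent; if $s_0$ is internal then $r(s_0)=1$ forces $|c(s_0)|=1$, so the previously selected best child is itself resolved with $|c|=1$, and since the recursion only alters an unresolved child one argues it leaves the $c,v,n$ values of that resolved best child (and of any resolved rival tied with it) intact, so $\mathtt{completed\_best\_action}$ again returns a state with completion value $c(s_0)$ and $\mathtt{backup\_resolution}$ again returns $1$. This uses the standing invariant, maintained by every earlier iteration for exactly the same reason, that an internal $s\in\Spartiel$ always satisfies $c(s)=c(\mathtt{completed\_best\_action}(s,\actions(s)))$ and $r(s)=\mathtt{backup\_resolution}(s)$.

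\textbf{Main obstacle.} The only non-routine point is the $\ubfm$ recomputation just mentioned: because $(\etats,\actions)$ is merely a directed acyclic graph, a child of $s_0$ may also be a \emph{descendant} of the child recursed into, so the claim that recursing cannot disturb the resolved best child is not purely syntactic. I would settle it by a reverse induction along the chain $s_0,\dots,s_k$ (equivalently, induction on recursion depth, well founded because the graph is finite and acyclic): assuming every deeper recursive call leaves untouched the $c$ and $r$ fields of each state resolved at the moment that call begins, the same follows for the call on $s_j$ — which is exactly what the root case needs. Everything else is routine bookkeeping over the pseudocode.
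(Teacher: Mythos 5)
The paper never actually proves this lemma: it is stated bare and then cited in Lemma~\ref{lem:completion-resolution} and Proposition~\ref{prop:exact}, the intended justification being ``by inspection of the algorithm''. Your write-up supplies exactly the argument that is left implicit, and its skeleton is sound: classify the states an iteration can write (freshly inserted children, the states $s_0,\dots,s_k$ of the selection chain, and the counters $n(s_j,s_{j+1})$), note that a state with $r(s)=1$ is already in $\Spartiel$ and, since nothing earlier in the iteration can lower $r$ from $1$ to $0$, is never selected into the chain, and handle the root of the call separately; your induction on recursion depth is also the right device for the genuine subtlety you identified, namely that in a DAG a resolved child of $s_0$ may reappear as a descendant of the child recursed into.

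Two details need patching, though neither breaks the proof. First, your bookkeeping claim that every write in the child-initialisation loop is guarded by $s'\notin\Spartiel$ does not match the pseudocode: the terminal branch of that loop executes $r(s'),c(s'),v(s')\leftarrow1,\fbin(s'),\fterminal(s')$ unconditionally, so an already-resolved terminal child \emph{is} rewritten; you need there the same idempotence remark you make for a terminal root (a resolved terminal state already carries exactly these values), after which $r$ and $c$ are indeed unchanged. Second, in your standalone-$\ubfm$-on-a-resolved-internal-root analysis, $r(s_0)=1$ does not force $\left|c(s_0)\right|=1$: backup\_resolution also returns $1$ when all children are resolved, e.g.\ a resolved draw with $c(s_0)=0$. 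That case is in fact the easy one --- $A$ is then empty, no recursion occurs, and the recomputation returns the old values by your standing invariant --- but the case split should be stated that way rather than via the false implication. As you observe, this root case is anyway vacuous in real executions (iterations are launched only on unresolved states and recursive calls only on unresolved children), and it is the only place where $v(s)$ could move, which is consistent with the lemma claiming invariance only of $r$ and $c$.
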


\begin{lem}
\label{lem:completion-resolution}Let $\left(\Spartiel,\actions\right)$
be a game tree built by the algorithm $\ubfm$ or by the algorithm
$\descente$ from a certain state. Let $s\in\Spartiel$. We have
the following properties: 
\begin{itemize}
\item If $r\left(s\right)=1$, then either $\left|c\left(s\right)\right|=1$
or for all $s'\in\actions\left(s\right)$, $r\left(s'\right)=1$ ;
\item If $\left|c\left(s\right)=1\right|$, then $r\left(s\right)=1$ ;
\item If for all $s'\in\actions\left(s\right)$, $r\left(s'\right)=1$,
then $r\left(s\right)=1$.
\end{itemize}
\end{lem}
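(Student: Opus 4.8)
The plan is to prove the three properties simultaneously, as a single invariant, by induction on the number of completed iterations of the algorithm, with, inside one iteration, a secondary induction on the depth of the recursive call; the argument is the same for $\ubfm$ and $\descente$, so I treat them together and only flag where they diverge. The driving observation is the definition of backup\_resolution: immediately after a call backup\_resolution$(s)$ assigns $r(s)$, one has $r(s)=1$ exactly when $|c(s)|=1$ or $\min_{s'\in\actions(s)}r(s')=1$, i.e.\ exactly when $|c(s)|=1$ or every $s'\in\actions(s)$ has $r(s')=1$; and every assignment to $c(s)$ in the two algorithms, apart from the initialisations, is immediately followed by such a call. Together with the two kinds of initialisation (a terminal state gets $r(s)=1$ with $\actions(s)=\emptyset$, so the second disjunct of property~1 holds vacuously and properties~2 and~3 hold trivially; a fresh non-terminal leaf gets $c(s)=0$ and $r(s)=0$, so properties~1 and~2 are vacuous), this shows that the three properties hold for $s$ at the instant any of $s$'s own values are (re)set.

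It remains to check that, once established, the properties are not destroyed. After $r(s),c(s)$ have been set, only two things can later change: $c(s)$ and $r(s)$ themselves, or the resolution value $r(s')$ of some child $s'\in\actions(s)$. For the first, Lemma~\ref{lem:Si-avant}, applied across iterations (and, by the same reasoning, within one), says that once $r(s)=1$ both $r(s)$ and $c(s)$ are frozen, so properties~1 and~2 persist once they have been triggered. For the second, the key point is monotonicity: a child's resolution value can only move from $0$ to $1$ and never back (Lemma~\ref{lem:Si-avant} again). Monotonicity by itself preserves property~1, since the disjunct ``every child is resolved'' only ever becomes more true; and it is the ingredient one must combine with the re-back-up performed along the descent path to obtain property~3.

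The real obstacle is property~3: one must guarantee that whenever a child $s'$ of $s$ passes to $r(s')=1$, the value $r(s)$ gets recomputed (or is already $1$), for otherwise $s$ could wind up with all of its children resolved while still carrying a stale $r(s)=0$. For this one uses that the resolution value of a state is only ever modified inside the recursive call on that very state, which during an iteration lies on the descent path, and that $\descente$ --- and $\ubfm$ in its ``state already in $\Spartiel$'' branch --- re-invokes backup\_resolution on every state of the descent path once the recursive call returns; for $\descente$ this even covers the moment a state is first inserted, whereas for $\ubfm$, unless one reads the insertion of a fresh node as also re-backing it up (which is what $\descente$ visibly does and is harmless), a just-inserted non-terminal state is only re-backed-up on the next iteration that visits it, so one must check that property~3 survives that window --- the only risk being a node all of whose children happen to be terminal. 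Granting this, the conclusion is: at the first instant all of $\actions(s)$ are resolved, the last child to become resolved did so during an iteration whose descent path ran through $s$, hence $r(s)$ was recomputed to $\min_{s'\in\actions(s)}r(s')=1$ at the end of that iteration, and by monotonicity together with Lemma~\ref{lem:Si-avant} it has remained $1$ ever since; this establishes property~3 and closes the induction.
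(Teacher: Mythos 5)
Your overall strategy---unpack the definition of backup\_resolution, combine it with the freezing property of Lemma~\ref{lem:Si-avant} and the monotonicity of $r$, and check that the invariant is re-established at every point where a value is written---is exactly the argument the paper compresses into its one-line proof, and your treatment of properties 1 and 2, and of property 3 for $\descente$, is sound. The problem is the step you flag and then wave through with ``Granting this'': the window, for $\ubfm$, between the insertion of a fresh non-terminal node and the next iteration that descends to it. This is not a deferrable detail; it is where the argument breaks. In the $\ubfm$ iteration the branch $s\notin\Spartiel$ only inserts $s$ and initialises its children---it never calls backup\_resolution($s$)---so if every child of the freshly inserted $s$ is terminal, then at the end of that iteration all children of $s$ satisfy $r\left(s'\right)=1$ while $r\left(s\right)=0$. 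Moreover $v\left(s\right)$ is not updated at insertion either, so a sibling may keep winning the selection in completed\_best\_action\_dual and $s$ need not be revisited for many iterations; the violation of property 3 can persist. Concretely: a root $s_{0}$ with non-terminal children $s_{1},s_{2}$; iteration 1 inserts $s_{0}$; iteration 2 descends to and inserts $s_{1}$; if all children of $s_{1}$ are terminal, the tree after two iterations falsifies the third bullet at $s_{1}$.

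So the third property, read as an invariant of the tree after an arbitrary number of $\ubfm$ iterations, is in fact false, and no argument can close the gap you left open; it does hold for $\descente$, for exactly the reason you give (backup\_resolution($s$) is called immediately after inserting $s$). Two further remarks. First, your step ``the last child to become resolved did so during an iteration whose descent path ran through $s$'' silently assumes each node has a unique parent; the game graph is only required to be a DAG, so you should either invoke the tree structure of $\Spartiel$ explicitly or restrict to it. Second, the damage is contained: Propositions~\ref{prop:exact} and~\ref{prop:arret} only ever use the first two bullets, so the honest fix is to prove bullets 1 and 2 as you do and, for $\ubfm$, either drop bullet 3 or weaken it to hold after the next iteration whose selection reaches $s$.
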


\begin{proof}
By definition of the algorithm (in particular by the definition and
by the use of the method backup\_resolution($s$) and because assoon
as $r\left(s\right)=1$, $r\left(s\right)$ and $c\left(s\right)$
do not change anymore (Lemma~\ref{lem:Si-avant})).
\end{proof}
\begin{prop}
\label{prop:exact}Let $\left(\etats,\actions\right)$ be a game tree
built by $\ubfm$ or $\descente$ from a certain state and let $s\in\etats$. 

If $r\left(s\right)=1$, then $c\left(s\right)=M\left(s\right)$.
\end{prop}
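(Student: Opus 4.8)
The plan is to establish the implication $r(s)=1\Rightarrow c(s)=M(s)$ by induction on $\ell(s)$, the length of a longest path from $s$ to a terminal state in $(\etats,\actions)$; this is a sound induction measure since the graph is finite and acyclic, every state reaches some terminal, and $\ell(s')<\ell(s)$ for each $s'\in\actions(s)$. Before the induction I would record two bookkeeping facts about the data the algorithms maintain. First, the completion value $c$ only ever takes the values $-1,0,1$: it is set to $\fbin\in\{-1,0,1\}$ on terminal states, to $0$ on freshly created non-terminal leaves, and otherwise copied from a child; hence, by the second item of Lemma~\ref{lem:completion-resolution}, any state $s'$ with $r(s')=0$ has $c(s')=0$. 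Second, if $s$ is non-terminal with $r(s)=1$, then $s$ has been expanded — its resolution is produced by backup\_resolution, which is reached only after all of $\actions(s)$ has been inserted in $\Spartiel$ — so $\actions(s)\subseteq\Spartiel$ and every child has $\ell(s')<\ell(s)$; moreover, combining Lemma~\ref{lem:Si-avant} with the observation that at the iteration where $r(s)$ became $1$ the algorithm had just performed $c(s)\leftarrow c(s^{\star})$ for $s^{\star}=$\,completed\_best\_action($s$, $\actions(s)$), one gets $c(s)=\max_{s'\in\actions(s)}c(s')$ when $\joueur(s)=\joueurUn$ and $c(s)=\min_{s'\in\actions(s)}c(s')$ when $\joueur(s)=\joueurDeux$, with the current values: once resolved a child's $c$ is frozen (Lemma~\ref{lem:Si-avant}), an unresolved child keeps $c=0$, and $c\in\{-1,0,1\}$, so no later update can move the relevant extremum away from $c(s)$.

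Then I would run the induction. If $s$ is terminal, the algorithm directly assigns $c(s)=\fbin(s)=M(s)$. If $s$ is non-terminal with $r(s)=1$, assume $\joueur(s)=\joueurUn$ (the case $\joueur(s)=\joueurDeux$ is symmetric, exchanging $\max$ with $\min$ and $1$ with $-1$). By the first item of Lemma~\ref{lem:completion-resolution}, either every child of $s$ is resolved, or $|c(s)|=1$. If all children are resolved, the induction hypothesis gives $c(s')=M(s')$ for each of them, so by the identity above $c(s)=\max_{s'}c(s')=\max_{s'}M(s')=M(s)$. If $|c(s)|=1$, then by the identity $c(s)=\max_{s'}c(s')$, and I split on the sign. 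If $c(s)=-1$, then every child has $c(s')=-1$ (all are $\le-1$ and lie in $\{-1,0,1\}$), hence $r(s')=1$ for every child by the second item of Lemma~\ref{lem:completion-resolution}; the induction hypothesis gives $M(s')=-1$ for all of them, so $M(s)=\max_{s'}M(s')=-1=c(s)$. If $c(s)=1$, then some child $s^{\star}$ has $c(s^{\star})=1$, hence $r(s^{\star})=1$ (again the second item of Lemma~\ref{lem:completion-resolution}); the induction hypothesis gives $M(s^{\star})=1$, and since $M$ takes values in $\{-1,0,1\}$ we conclude $M(s)=\max_{s'\in\actions(s)}M(s')=1=c(s)$. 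This exhausts the cases.

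The step I expect to be the real work is the second bookkeeping fact: justifying that the value $c(s)$ recorded at a resolved non-terminal state is exactly the maximum (resp.\ minimum) of the \emph{current} completion values of its children. Since $c(s)$ is written only during an iteration that descends through $s$, and a child that is unresolved at that moment may later be resolved and change its $c$, one must argue from Lemma~\ref{lem:Si-avant} (a resolved state never again changes $r$ or $c$), together with the first fact (unresolved $\Rightarrow c=0$, and $c\in\{-1,0,1\}$), that such a late change can only move a child's completion value toward $c(s)$, never past it — for $\joueur(s)=\joueurUn$, say, a child resolved strictly after $s$ was can acquire $c\le c(s)$ at most, and if $c(s)\neq\pm1$ then at the resolving iteration all children were already resolved and hence frozen. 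Granting this, the remainder is a routine structural induction driven entirely by Lemma~\ref{lem:completion-resolution} and the definition of $M$.
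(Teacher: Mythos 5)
Your proof is correct and follows essentially the same route as the paper's: induction over the acyclic game graph, the case split ``all children resolved'' versus $\left|c\left(s\right)\right|=1$ from Lemma~\ref{lem:completion-resolution}, the reduction of the $c\left(s\right)=-1$ case to the all-children-resolved case, and the single resolved witness child in the $c\left(s\right)=1$ case, all resting on the freezing property of Lemma~\ref{lem:Si-avant}. The only difference is presentational: you make explicit (as ``bookkeeping facts'') the persistence of the identity $c\left(s\right)=\max_{s'\in\actions\left(s\right)}c\left(s'\right)$ and the fact that unresolved states carry $c=0$, which the paper uses implicitly.
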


\begin{proof}
Let $\left(\Spartiel,\actions\right)$ be a game tree built by $\ubfm$
(resp. $\descente$) from a certain state (i.e. the algorithm has
been applied $k$ times on that state). We show by induction this
property. Let $s\in\Spartiel$ such that $r\left(s\right)=1$. We
first show that this property is true for terminal states. 

Suppose in addition that $\terminal\left(s\right)$ is true. Thus,
$c\left(s\right)=\fbin\left(s\right)$. Consequently, $c\left(s\right)=\fbin\left(s\right)=M\left(s\right)$.

We now show this property for non-terminal states: therefore suppose
instead that $\terminal\left(s\right)$ is false.
\begin{itemize}
\item Suppose on the one hand that $\joueur\left(s\right)=\joueurUn$.
\end{itemize}
If $r\left(s\right)=1$, then either $\left|c\left(s\right)\right|=1$
or for all $s'\in\actions\left(s\right)$, $r\left(s'\right)=1$,
by Lemma~\ref{lem:completion-resolution}. If $c\left(s\right)=-1$,
then, at the iteration that calculated $c\left(s\right)=-1$, we had:
for all $s'\in\actions\left(s\right)$, $c\left(s'\right)=-1$ (as
at this moment $c\left(s\right)=\max_{s'\in\actions\left(s\right)}c\left(s'\right)$).
Therefore, since this iteration, for all $s'\in\actions\left(s\right)$,
$r\left(s'\right)=1$ (by Lemma~\ref{lem:completion-resolution}
and Lemma~\ref{lem:Si-avant}). Thus, we have two cases: either $c\left(s\right)=1$
or for all $s'\in\actions\left(s\right)$, $r\left(s'\right)=1$.

If for all $s'\in\actions\left(s\right)$, $r\left(s'\right)=1$,
then, by induction, we have for all $s'\in\actions\left(s\right)$,
$c\left(s'\right)=M\left(s'\right)$. But $c\left(s\right)=\max_{s'\in\actions\left(s\right)}c\left(s'\right)$.
Therefore, $c\left(s\right)=\max_{s'\in\actions\left(s\right)}M\left(s'\right)$,
hence $c\left(s\right)=M\left(s\right)$.

If $c\left(s\right)=1$, then there exists $\tilde{s}\in\actions\left(s\right)$
such that $c\left(\tilde{s}\right)=c\left(s\right)$ at the iteration
calculating $c\left(s\right)=1$ (as $c\left(s\right)=\max_{s'\in\actions\left(s\right)}c\left(s'\right)$
at this iteration). Since we had $c\left(\tilde{s}\right)=1$ at this
iteration, we also had $r\left(\tilde{s}\right)=1$, by Lemma~\ref{lem:completion-resolution}.
Therefore, we always have $c\left(\tilde{s}\right)=c\left(s\right)$
and $r\left(\tilde{s}\right)=1$. Moreover, by induction, $c\left(\tilde{s}\right)=M\left(\tilde{s}\right)$,
hence $c\left(s\right)=M\left(\tilde{s}\right)$. However, since $M\left(\tilde{s}\right)=1$,
we have $M\left(\tilde{s}\right)\geq M\left(s'\right)$ for all $s'\in\actions\left(s\right)$.
Thus, $c\left(s\right)=\max_{s'\in\actions\left(s\right)}M\left(s'\right)$,
hence $c\left(s\right)=M\left(s\right)$.
\begin{itemize}
\item Now suppose on the other hand, instead that $\joueur\left(s\right)=\joueurDeux$. 
\end{itemize}
If $r\left(s\right)=1$ then either $\left|c\left(s\right)\right|=1$
or for all $s'\in\actions\left(s\right)$, $r\left(s'\right)=1$,
by Lemma~\ref{lem:completion-resolution}. If $c\left(s\right)=1$,
then at the iteration that calculated $c\left(s\right)=1$, we had:
for all $s'\in\actions\left(s\right)$, $c\left(s'\right)=1$ (as
at this moment $c\left(s\right)=\min_{s'\in\actions\left(s\right)}c\left(s'\right)$).
Therefore, since this iteration, for all $s'\in\actions\left(s\right)$,
$r\left(s'\right)=1$ (by Lemma~\ref{lem:completion-resolution}
and Lemma~\ref{lem:Si-avant}). Thus, we have two cases: either $c\left(s\right)=-1$
or for all $s'\in\actions\left(s\right)$, $r\left(s'\right)=1$.

If for all $s'\in\actions\left(s\right)$, $r\left(s'\right)=1$,
then, by induction, we have for all $s'\in\actions\left(s\right)$,
$c\left(s'\right)=M\left(s'\right)$. But $c\left(s\right)=\min_{s'\in\actions\left(s\right)}c\left(s'\right)$.
Therefore $c\left(s\right)=\min_{s'\in\actions\left(s\right)}M\left(s'\right)$,
hence $c\left(s\right)=M\left(s\right)$.

If $c\left(s\right)=-1$, then there exists $\tilde{s}\in\actions\left(s\right)$
such that $c\left(\tilde{s}\right)=c\left(s\right)$ at the iteration
calculating $c\left(s\right)=-1$ (as $c\left(s\right)=\min_{s'\in\actions\left(s\right)}c\left(s'\right)$
at this iteration). Since we had $c\left(\tilde{s}\right)=-1$ at
this iteration, we also had $r\left(\tilde{s}\right)=1$, by Lemma~\ref{lem:completion-resolution}.
Therefore, we always have $c\left(\tilde{s}\right)=c\left(s\right)$
and $r\left(\tilde{s}\right)=1$. Moreover, by induction, $c\left(\tilde{s}\right)=M\left(\tilde{s}\right)$,
hence $c\left(s\right)=M\left(\tilde{s}\right)$. However, since $M\left(\tilde{s}\right)=-1$,
we have $M\left(\tilde{s}\right)\leq M\left(s'\right)$ for all $s'\in\actions\left(s\right)$.
Thus, $c\left(s\right)=\min_{s'\in\actions\left(s\right)}M\left(s'\right)$,
hence $c\left(s\right)=M\left(s\right)$.
\end{proof}
\begin{prop}
\label{prop:arret}Let $\etats$ be the set of states of a two-player
game. It exists $N\in\mathbb{N}$ such that after applying $N$ times
the algorithm $\descente$ (resp. $\ubfm$) on a state $s\in\etats$,
we have $r\left(s\right)=1$.
\end{prop}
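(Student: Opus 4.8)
The plan is to prove a quantitative version of the statement: I will take $N$ to be one more than the number of states reachable from $s$ in the DAG $(\etats,\actions)$ and show that $r(s)=1$ after $N$ iterations of either algorithm. Write $D\subseteq\etats$ for that reachable set; it is finite, and every state that a run started at $s$ ever inserts into $\Spartiel$ lies in $D$. By Lemma~\ref{lem:Si-avant} the value $r(s)$ is monotone along the run -- once it equals $1$ it stays $1$ -- so it is enough to bound the number of iterations during which $r(s)=0$.

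The heart of the argument is the following progress claim: while $s$ is non-terminal and $r(s)=0$, every iteration of $\ubfm$ (resp. $\descente$) inserts at least one new state of $D$ into $\Spartiel$. An iteration walks down from $s$, at each already-expanded state $t\in\Spartiel$ moving to a child of $t$ whose resolution value is $0$, and it recurses only as long as such a child exists; when it reaches a state not yet in $\Spartiel$, that state is inserted into $\Spartiel$ (in the $\descente$ case, so is every state visited on the way down). I will show the walk must terminate in this ``reach a fresh state'' way. The walk follows edges of the graph and stays inside $D$, so by acyclicity and finiteness of $D$ it cannot continue indefinitely; hence it stops at some state $t$ that is either outside $\Spartiel$ (the case we want) or inside $\Spartiel$ with no child having resolution value $0$. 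In the latter case all children of $t$ are resolved, so $r(t)=1$ by the third item of Lemma~\ref{lem:completion-resolution}; but $t$ is either $s$ itself, with $r(s)=0$, or a child chosen precisely because its resolution value is $0$, so $r(t)=0$, a contradiction. Hence the latter case never occurs and a new state of $D$ is added. (If $s$ is terminal, the first iteration already sets $r(s)=1$, so I may assume $s$ non-terminal.)

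Finishing is then routine: by monotonicity the iterations with $r(s)=0$ form an initial segment of the run, and by the progress claim $\Spartiel\cap D$ gains at least one element in each of them; since $\Spartiel\cap D\subseteq D$ is finite there are at most $|D|$ such iterations, so $r(s)=1$ after at most $|D|+1$ iterations, and one may take $N=|D|+1$ (or, crudely, $N=|\etats|+1$). The same reasoning applies verbatim to $\ubfm$ and to $\descente$. One can also phrase the conclusion directly: once $\Spartiel\supseteq D$, any attempt to follow a chain of resolution-value-$0$ states downward from $s$ either runs forever in the finite acyclic graph $(D,\actions)$ or reaches a state whose children are all resolved, which by Lemma~\ref{lem:completion-resolution} forces the resolution value of that state -- and hence, propagating up the chain, of $s$ -- to be $1$; so $\Spartiel\supseteq D$ already implies $r(s)=1$. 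I expect the delicate point to be exactly the progress claim, specifically the verification that the downward walk cannot bottom out at an already-expanded state: this is what rules out the non-termination pathology described in the introduction, and it rests entirely on Lemma~\ref{lem:completion-resolution}.
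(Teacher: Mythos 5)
There is a genuine gap: your progress claim --- that while $r(s)=0$ every iteration inserts a \emph{new} state into $\Spartiel$ --- is false, and the contradiction you use to establish it misapplies Lemma~\ref{lem:completion-resolution}. The third item of that lemma is not an invariant that holds at the moment the downward walk arrives at a state; it only describes what backup\_resolution computes \emph{when it is called} on that state. Concretely, for $\ubfm$: when a non-terminal state $t$ is first expanded (the branch ``$t\notin\Spartiel$'' of Algorithm~\ref{alg:ubfms_search}), its children are added and initialized but backup\_resolution is \emph{not} called on $t$, so $t$ keeps $r(t)=0$. If all of $t$'s children happen to be terminal, then after that iteration $t$ sits in $\Spartiel$ with $r(t)=0$ while every child has $r=1$. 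The next iteration that descends to $t$ finds $A=\emptyset$, adds nothing to $\Spartiel$, and merely flips $r(t)$ to $1$. So the case you claim ``never occurs'' does occur, your per-iteration bound of $|D|$ insertions is wrong, and the final remark (``$\Spartiel\supseteq D$ already implies $r(s)=1$'') fails for the same reason: resolution values propagate upward only along walks that actually call backup\_resolution, not automatically. (In the DAG setting a child can also become resolved via a different parent, which breaks the invariant even for states whose backup has already been performed.)

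The fix is exactly the weaker disjunctive invariant the paper uses: each iteration either adds a new state to $\Spartiel$ \emph{or} changes some state's resolution value from $0$ to $1$; both events can happen at most $\left|\etats\right|$ times (resolution is monotone by Lemma~\ref{lem:Si-avant}), giving $N=2\left|\etats\right|$ rather than $\left|\etats\right|+1$. Your walk-termination argument via acyclicity and your reduction to the reachable set $D$ are fine and match the paper's structure; only the accounting of what an iteration accomplishes needs to be corrected.
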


\begin{proof}
We show that with at most $N=2\left|\etats\right|$ iterations of
$\descente$ (resp. $\ubfm$) applied on a same state $s\in\etats$,
we have $r\left(s\right)=1$. Note first that if $s$ is terminal
or satisfies $r\left(s\right)=1$. then after having applied the algorithm,
we have $r\left(s\right)=1$. Now suppose that $s$ is not terminal
and satisfies $r\left(s\right)=0$. To show this property we show
that each iteration adds in $\Spartiel$ at least one state of $\etats$
not being in $\Spartiel$ or marks as ``solved'' an additional state,
that is, a state $s'\in\etats$ satisfying $r\left(s'\right)=0$,
satisfies $r\left(s'\right)=1$ after the iteration. This is sufficient
to show the property, because either after one of the iterations,
we have $r\left(s\right)=1$, or the iterative application of the
algorithm ends up adding in $\Spartiel$ all descendants of $s$ and/or
by marking as ``solved'' all states of $\Spartiel$. Indeed, if
all the descendants of $s$ are added then necessarily $r\left(s\right)=1$
(since by induction all descendants verify $r\left(s\right)=1$ ;
by definition and use of backup\_resolution($s$)). Since $S$ is
finite, with at most $2\left|\etats\right|$ iterations, $r\left(s\right)=1$. 

We therefore show, under the assumptions $r\left(s\right)=0$ and
$\lnot\terminal\left(s\right)$, that each iteration adds at least
one new state of $S$ in $\Spartiel$ or change the value $r\left(s'\right)$
from $0$ to $1$ for a some state $s'\in\etats$. Let $\tilde{s}$
be the current state analyzed by the algorithm (at the beginning $\tilde{s}=s$).
If $\tilde{s}$ is not in $\Spartiel$, then $\tilde{s}$ is added
in $\Spartiel$. Otherwise for $\ubfm$ and then for $\descente$,
the algorithm recursively chooses the best child of the current state
satisfying $r\left(\tilde{s}'\right)=0$, which we denote $\tilde{s}'$.
For $\ubfm$, this recursion is performed until $\tilde{s}$ is not
in $\Spartiel$ (and adds it ) or that $\tilde{s}$ is terminal or
that there is no child state $\tilde{s}'$ satisfying $r\left(\tilde{s}'\right)=0$.
Given that $\tilde{s}$ necessarily satisfies $r\left(\tilde{s}\right)=0$
at the start of each recursion, $\tilde{s}$ is not terminal. Therefore,
this recursion is performed until $\tilde{s}$ is not in $\Spartiel$
or that there is no child state $\tilde{s}'$ satisfying $r\left(\tilde{s}'\right)=0$.
In the latter case, all the children $\tilde{s}'$ of the state $\tilde{s}$
satisfy $r\left(\tilde{s}'\right)=1$, and thus, at the end of the
iteration we have $r\left(\tilde{s}\right)=1$ while at the beginning
we have $r\left(\tilde{s}\right)=0$. Therefore, with $\ubfm$, each
iteration effectively adds a new state in $\Spartiel$ or mark as
solved a new state. With $\descente$ this recursion is performed
until the state $\tilde{s}$ is terminal or satisfies $r\left(\tilde{s}\right)=1$
after the block of the test ``$\tilde{s}\in\Spartiel$''. Note that
with $\descente,$ if $r\left(\tilde{s}\right)=0$ after the block
of the test ``$\tilde{s}\in\Spartiel$'', then there still exists
a child state $\tilde{s}'$ satisfying $r\left(\tilde{s}'\right)=0$
(otherwise the block changes the value of $\tilde{s}$ to $r\left(\tilde{s}\right)=1$).
Since $r\left(\tilde{s}\right)=0$ at the start of each descent recursion
step (and that $s$ is not terminal), the recursion is performed until
the state $\tilde{s}$ satisfies $r\left(\tilde{s}\right)=1$ after
the test $\tilde{s}\in\Spartiel$. Thus, when the iteration ends,
before the test, we have $r\left(\tilde{s}\right)=0$ and after the
test, we have $r\left(\tilde{s}\right)=1$. Therefore, necessarily
before the test, $\tilde{s}$ was not in $\Spartiel$ and thus $\tilde{s}$
has been added. Consequently, for the two algorithms, an iteration
adds at least one new state of $\etats$ in $\Spartiel$ or marks
as resolved a new state (under the assumption $s$ is neither terminal
nor solved).
\end{proof}
\begin{thm}
The algorithm $\descente$ and the algorithm $\ubfm$ are complete,
i.e. applying $\descente$ (resp. $\ubfm$) on a state $s\in\etats$,
with a search time $\tau$ large enough, gives $r\left(s\right)=1$
and $c\left(s\right)=M\left(s\right)$.
\end{thm}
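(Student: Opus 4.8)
The plan is to derive the theorem by straightforwardly combining Proposition~\ref{prop:arret}, Proposition~\ref{prop:exact} and Lemma~\ref{lem:Si-avant}, once the gap between ``number of iterations'' and ``search time $\tau$'' has been bridged.

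First I would check that a single iteration of $\descente$ (resp.\ $\ubfm$) always terminates and runs in bounded time. The recursive call inside one iteration moves from the current state to one of its children, so it follows a path in the finite directed acyclic graph $\left(\etats,\actions\right)$; since this graph has no cycle, such a path visits pairwise distinct states and has length at most $\left|\etats\right|$, hence the recursion depth is finite. At each level only finitely much work is done (a loop over the finitely many children of the state, plus calls to completed\_best\_action($s$) and backup\_resolution($s$), which are themselves finite). Consequently there is a uniform bound $t_{\max}$ on the time taken by one iteration, whatever the state it is applied on and whatever the current partial tree $\Spartiel$.

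Next, by Proposition~\ref{prop:arret} there is $N\in\mathbb{N}$ (in fact $N\leq 2\left|\etats\right|$) such that after $N$ iterations applied to $s$ one has $r\left(s\right)=1$. If the search time satisfies $\tau\geq N\cdot t_{\max}$, then the algorithm either performs at least $N$ iterations --- and then $r\left(s\right)=1$ --- or it halts earlier, which by its stopping rule happens precisely because $r\left(s\right)=1$ has already been reached. Either way, $r\left(s\right)=1$ holds when the search returns. Moreover, once $r\left(s\right)=1$ holds after some iteration, Lemma~\ref{lem:Si-avant} guarantees that $r\left(s\right)$ and $c\left(s\right)$ are not modified by any subsequent iteration, so the values returned by the search still satisfy $r\left(s\right)=1$. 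Finally, applying Proposition~\ref{prop:exact} to $s$ yields $c\left(s\right)=M\left(s\right)$, which completes the argument.

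The only genuinely delicate point --- and it is mild --- is the termination of a single iteration, in particular for $\descente$, whose downward recursion is bounded not by a fixed depth but only by the impossibility of extending forever a chain of distinct unresolved states; this is exactly where finiteness and acyclicity of the game graph enter. Everything else is bookkeeping layered on top of the two propositions, so I do not expect further obstacles.
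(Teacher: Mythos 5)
Your proposal is correct and follows essentially the same route as the paper, which simply invokes Proposition~\ref{prop:arret} followed by Proposition~\ref{prop:exact}; your additional bookkeeping (termination of a single iteration via acyclicity, the bound $\tau\geq N\cdot t_{\max}$, and persistence of $r(s)$ and $c(s)$ via Lemma~\ref{lem:Si-avant}) merely makes explicit what the paper leaves implicit.
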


\begin{proof}
By Proposition~\ref{prop:arret}, then by Proposition~\ref{prop:exact}.
\end{proof}

\section{Perfect Multiplayer Games\label{sec:multi-joueurs}}

We now generalize to the multiplayer framework the two algorithms,
unbounded minimax and descent with completion, to obtain respectively
Unbounded $\maxn$ (that we note more succinctly $\umaxn$) and $\descenten$
(both with completion). We propose two possible generalizations for
each of the two algorithms and we show that the $4$ algorithms are
all complete. Note that there is a difference between the two generalizations
only if the game have draws.

In the multi-player framework (or two players with non-zero sum),
the gain of the first player is no longer sufficient to characterize
the end of the game. Each player must therefore have their own end-of-game
gain. A terminal state is therefore evaluated by a $n$-uplet of values
where the $i$-th component is the gain of $i$-th player and $n$
is the total number of players. The goal of each player is then obviously
to maximize its final gain. Unbounded $\maxn$ and $\descenten$ consist,
in the context of each of the two variants, in iteratively extending
the best sequence of unresolved states. For this, each player maximizes
its own gain, that is to say it plays the state $s'$ from the state
$s$ which lexicographically maximizes the pair $\left(c\left(s'\right)_{\joueur\left(s\right)},v\left(s'\right)_{\joueur\left(s\right)}\right)$
until reaching a leaf of the tree for Unbounded $\maxn$and and until
reaching a terminal state or a resolved state for $\descenten$. As
in the two-player framework, during each iteration, states that are
not in the partial game tree are added.

However, generalizing to the multiplayer framework is not so easy,
as a fundamental property is lost. Unlike the two-player framework,
if the best child completion value is non-zero, then the current state
is not necessarily resolved. Formally, the property $\argmax_{s'\in\actions\left(s\right)}c\left(s'\right)_{\joueur\left(s\right)}\neq\left(0,\ldots,0\right)\Rightarrow r\left(s\right)=1$
is lost (or $r\left(s\right)=1$ no longer means to be solved). It
is therefore necessary to modify the way in which either the completion
value or the resolution value is calculated. Each of the two possibilities
leads to a different algorithm, which builds a different game tree.
The first possibility is to propagate the best completion value to
the current state, only if the current state is resolved. With this
approach, we get a weaker property $c\left(s\right)\neq\left(0,\ldots,0\right)\implies r\left(s\right)=1$,
but we also lose the weaker property $c\left(s\right)=\argmax_{s'\in\actions\left(s\right)}c\left(s'\right)_{\joueur\left(s\right)}$.
The other possibility (introduced later) is to always propagate the
best completion value, which requires that a state is marked as solved
only if its best child is marked as solved. With this variant, we
do not have the property $c\left(s\right)\neq\left(0,\ldots,0\right)\implies r\left(s\right)=1$,
but we recover the property $c\left(s\right)=\argmax_{s'\in\actions\left(s\right)}c\left(s'\right)_{\joueur\left(s\right)}$.
It is not clear whether one of the two algorithms is better than the
other. With the second approach, one can use exact information about
what is likely to happen (instead of being limited to an estimate
of what will happen). This information is the value $c\left(s\right)$,
which is, with this variant, the $\maxn$ value in the partial game
tree based on the classic terminal evaluation, i.e. the completion
value of the last state of the best states sequence (which is terminal
if $c\left(s\right)\neq\left(0,\ldots,0\right)$ ; in this case $c\left(s\right)$
is the $n$-uplet of end-of-game gains). This additional information,
although being the exact value of a terminal state, is not necessarily
the exact value of that state. This happens if one of the players
prefers to have the guarantee of a draw to a possibility of losing
the game.
\begin{example}
The first player has the choice between two states $s'$ and $s''$
in a certain state $s$. The state $s'$, which is terminal, is evaluated
by $c\left(s'\right)=\left(0,0,0,-1\right)$ and $v\left(s'\right)=\left(0,0,0,-1\right)$.
The state $s''$ is evaluated by $c\left(s''\right)=\left(0,0,0,0\right)$
and $v\left(s''\right)=\left(-1,1,1,1\right)$. With the current state
of knowledge about the game, a reasonable choice for the first player
is to choose the state $s'$ (choose the state $s''$ is also a reasonable
choice but it is a risky choice which seems less interesting although
it could actually be more interesting). With the first variant, since
the state $s$ is not resolved, $s$ is evaluated by $c\left(s\right)=\left(0,\ldots,0\right)$.
With the second variant, $c\left(s\right)=\left(0,1,0,-1\right)$,
although $s$ is still not resolved.
\end{example}

With the first approach, this information, the completion value $c\left(s\right)$,
which may be misleading or advantageous, is not used to distinguish
unresolved states. Note that with the second approach, it is necessary
to adapt the calculation of the best child to favor the resolved states
over the unresolved states with the same completion value.

In the multi-player framework, there is another property that is lost,
it is the uniqueness of the completion value of a state. This property
can however be recovered under a certain assumption. In particular,
it is necessary to consider the equilibrium point with respect to
the pairs $\left(c\left(s\right),v\left(s\right)\right)$, i.e. set
the value $\maxn$ of a state $s$ as the pair $\left(\fbin\left(s'\right),\fterminal\left(s'\right)\right)$
of the last state $s'$ of the best states sequence starting from
the state $s$ in the complete game tree. In addition, the terminal
evaluation $\fterminal$ must verify a particular property, which
guarantees to be able to decide between two terminal states of different
gains. Otherwise, without these two conditions, two states can have
the same gain for a same player but different gains for the other
players. Thus, without these two conditions, there would be multiple
optimal strategies with an uncertain outcome (i.e. a \textquotedblleft king-making\textquotedblright{}
effect).

\subsection{Definition\label{subsec:Definition-multi}}

We are now interested in multi-player perfect information games, that
is to say formally:
\begin{defn}
A perfect multiplayer game with $n$ players ($n>1$), is a tuple
$\left(S,\actions,\joueur,\fbin\right)$ where
\begin{itemize}
\item $\left(\etats,\actions\right)$ is a finite acyclic directed graph,
\item $\joueur$ is a function from $\etats$ to $\left\{ 1,\ldots,n\right\} $,
\item $\fbin$ is a function from $\liste{s\in\etats}{\terminal\left(s\right)}$
to $\left\{ -1,0,1\right\} ^{n}$, with :
\begin{itemize}
\item $\terminal$ a predicate such as $\terminal\left(s\right)$ is true
if and only if $\left|\actions\left(s\right)\right|=0$ and
\item $\actions\left(s\right)$ the set defined by $\liste{s'\in\etats}{\left(s,s'\right)\in\actions}$,
for all $s\in\etats$.
\end{itemize}
\end{itemize}
\end{defn}

The set $\etats$ is the set of game states. $\actions$ encodes the
actions of the game : $\actions\left(s\right)$ is the set of states
reachable by an action from $s$. The function $\joueur$ indicates,
for each state, the number of the player whose turn it is, that is
to say the player who must play. The predicate $\terminal\left(s\right)$
indicates if $s$ is a terminal state (i.e. an end-of-game state).
Let $s\in\etats$ such that $\terminal\left(s\right)$ is true. The
value $\fbin\left(s\right)_{j}$ is the gain for the $j$-th player
in the terminal state $s$. We have $\fbin\left(s\right)_{j}=1$ if
the $j$-th player is winning, $\fbin\left(s\right)_{j}=0$ in the
event of a draw for the $j$-th player and $\fbin\left(s\right)_{j}=-1$
if the $j$-th player is losing.

The terminal evaluation $\fbin$ is often not very informative about
the quality of a game. A more expressive terminal evaluation function,
valuable in $\mathbb{R}^{n}$, can be used to favor some games to
others, which can improve the level of play. We denote such functions
by $\fterminal$. For example, in score games, $\fterminal(s)$ can
be the endgame scores of each player in the terminal state $s$. In
order to guide the search of the best action, it is necessary to be
able to evaluate non-terminal states. To do this, an evaluation function
from $\etats$ to $\mathbb{R}^{n}$, denoted by $\fadapt(s)$, is
used. A good evaluation function $\fadapt(s)$ provides for example
an approximation of the value $\maxn$ of $s$. Such a function $\fadapt(s)$
can be determined by reinforcement learning, by using $\fterminal(s)$
as a ``reinforcement heuristic'' and using the descent framework
\cite{2020learning} with the algorithm $\descenten$ instead of $\descente$.

\subsection{First Multi-player Generalization }

We start by introducing the first generalization, which consists in
modifying $c\left(s\right)$ only if $s$ is resolved and which introduces
an additional evaluation $c'\left(s\right)$. The evaluation $c'\left(s\right)$
is calculated from the values $c\left(s'\right)$ of the children
$s'$ of $s$ in the same way as $c\left(s\right)$ in the classic
case. If $s$ is resolved, then $c\left(s\right)=c'\left(s\right)$
and $c\left(s\right)$ is the value $\maxn$ in the partial game tree
whose leaves are labeled by the classic game gain $\fbinn$. Otherwise,
$c\left(s\right)=\left(0,\ldots,0\right)$. This ensures that if $c\left(s\right)$
(resp. $c'\left(s\right)$) is not zero, then this value corresponds
to a resolved child. To calculate an equilibrium point (an optimal
strategy) in the multi-player case, it is also necessary that a state
be considered resolved if it has a winning child state for the current
player and the value $v\left(s'\right)$ of this child state is maximum
(or that all the children are resolved). An iteration of Unbounded
$\maxn$ is described in Algorithm~\ref{alg:umaxn_iteration}. Once
the partial game tree has been built, as in the two-player framework
\cite{2020learning}, there are two strategies for deciding which
action to play: the one leading to the best value state (we choose
the child $s'$ of $s$ which lexicographically maximizes $\left(c\left(s'\right)_{\joueur\left(s\right)},v\left(s'\right)_{\joueur\left(s\right)}\right)$)
(see Algorithm~\ref{alg:umaxn}) or the safest action (we choose
the action leading to a winning state of higher value if it exists,
otherwise we choose the child $s'$ of $s$ which maximizes the number
of times it has been selected from $s$ and since the start of the
game (see Algorithm~\ref{alg:umaxns}) . An iteration of $\descenten$
is described in Algorithm~\ref{alg:descente-n_iteration} and the
complete code of $\descenten$ is described in Algorithm~\ref{alg:descente_n}.

\begin{algorithm}[!bh]
\DontPrintSemicolon\SetAlgoNoEnd

\SetKwFunction{completedbestactionn}{best\_action\_n}\SetKwFunction{completedbestactionndual}{best\_action\_n\_dual}\SetKwProg{myproc}{Function}{}{}

\myproc{\completedbestactionn{$s$, $T$}}{

return ${\displaystyle \argmax_{s'\in\actions\left(s\right)}\left(c\left(s'\right)_{\joueur\left(s\right)},v\left(s'\right)_{\joueur\left(s\right)},n\left(s,s'\right)\right)}$\;

}

\;

\myproc{\completedbestactionndual{$s$, $T$}}{

return ${\displaystyle \argmax_{s'\in\actions\left(s\right)}\left(c\left(s'\right)_{\joueur\left(s\right)},v\left(s'\right)_{\joueur\left(s\right)},-n\left(s,s'\right)\right)}$\;

}

\;

\caption{Best action computation for $n$ players (see Section~\ref{subsec:Definition-multi}
for the definitions of symbols).\label{alg:best_action_n}}
\end{algorithm}

\begin{algorithm}[!bh]
\DontPrintSemicolon\SetAlgoNoEnd

\SetKwFunction{backupresolutionn}{backup\_resolution\_n} \SetKwProg{myproc}{Function}{}{}

\myproc{\backupresolutionn{$s$}}{

\eIf{$c'\left(s\right)_{\joueur\left(s\right)}=1\et v\left(s\right)_{\joueur\left(s\right)}=\max_{s'\in\liste{s'\in S}{\terminal\left(s'\right)}}\fterminaln\left(s'\right)_{\joueur\left(s\right)}$}{return
$1$\;}{return $\minimum{}_{s'\in\actions\left(s\right)}r\left(s'\right)$\;}

}

\;

\protect\protect

\caption{Definition of backup\_resolution\_n($s$), which updates the resolution
value of the state $s$ from its child states. \label{alg:backup_n}}
\end{algorithm}

\begin{algorithm}[!bh]
\DontPrintSemicolon\SetAlgoNoEnd

\SetKwFunction{completionubfmsiteration}{$\umaxn$\_iteration}

\SetKwFunction{backupresolutionn}{backup\_resolution\_n}\SetKwProg{myproc}{Function}{}{}

\myproc{\completionubfmsiteration{$s$, $\Spartiel$, $T$, $\fadaptn$,
$\fterminaln$}}{

\eIf{$\terminal\left(s\right)$}{

$\Spartiel\leftarrow\Spartiel\cup\{s\}$\;

$r\left(s\right),c\left(s\right),v\left(s\right)\leftarrow1,\fbinn\left(s\right),\fterminaln\left(s\right)$

}{

\eIf{$s\notin\Spartiel$}{

$\Spartiel\leftarrow\Spartiel\cup\{s\}$\;

\ForEach{$s'\in\actions\left(s\right)$}{

\eIf{$\terminal\left(s'\right)$}{

$\Spartiel\leftarrow\Spartiel\cup\{s'\}$\;

$r\left(s'\right),c\left(s'\right),v\left(s'\right)\leftarrow1,\fbinn\left(s'\right),\fterminaln\left(s'\right)$\;

}{

\If{$s'\notin\Spartiel$}{

$r\left(s'\right),c\left(s'\right),v\left(s'\right)\leftarrow0,\left(0,\ldots,0\right),\fadaptn\left(s'\right)$\;

}

}

}

}{

$A\leftarrow\liste{s'\in\actions\left(s\right)}{r\left(s'\right)=0}$

\If{$\left|A\right|>0$}{

$s'\leftarrow$ \completedbestactionndual{$s$, $A$}\;

$n(s,s')\leftarrow n(s,s')+1$\;

 \completionubfmsiteration{$s'$, $\Spartiel$, $T$, $\fadaptn$,
$\fterminaln$}}\;

$s'\leftarrow$ \completedbestactionn{$s$, $\actions\left(s\right)$}\;

$c'(s),v(s)\leftarrow c\left(s'\right),v\left(s'\right)$\;

$r\left(s\right)\leftarrow$ \backupresolutionn{$s$}\;

\If{$r\left(s\right)$}{

$c(s)\leftarrow c'\left(s\right)$\;

}}

}

}

\;

\protect\protect

\caption{Iteration algorithm of $\protect\umaxn$ with completion (see Section~\ref{subsec:Definition-multi}
for the definitions of symbols, Algorithm~\ref{alg:best_action_n}
for the definitions of completed\_best\_action\_n($s$) and Algorithm~\ref{alg:backup_n}
for the definitions of backup\_resolution\_n($s$)). Note: $T=(v,c,r)$,
each $c\left(s\right)$ is initialized to $\left(0,\ldots,0\right)$
and each number of selection of $s'$ from $s$ $n(s,s')$ is initialized
to $0$.\label{alg:umaxn_iteration}}
\end{algorithm}
\begin{algorithm}[!bh]
\DontPrintSemicolon\SetAlgoNoEnd

\SetKwFunction{decisioncompletionumaxn}{\emph{$\umaxn$}}\SetKwFunction{time}{time}\SetKwProg{myproc}{Function}{}{}

\;

\myproc{\decisioncompletionumaxn{$s$, $\Spartiel$, $T$, $\fadaptn$,
$\fterminaln$, $\tau$}}{

$t=$ \time{}\;

\lWhile{\time{}$-\,t<\tau\wedge r\left(s\right)=0$}{\completionubfmsiteration{$s$,
$\Spartiel$, $T$, $\fadaptn$, $\fterminaln$}}

return \completedbestactionn{$s$, $T$}\;

}\;

\caption{The algorithm\emph{ $\protect\umaxn$} (see Section~\ref{subsec:Definition-multi}
for the definitions of symbols ; see Algorithm~\ref{alg:umaxn_iteration}
for the code of $\protect\umaxn$\_iteration($s$, $\protect\Spartiel$,
$T$, $\protect\fadaptn$, $\protect\fterminaln$) ; time() returns
the current time in seconds ; $\tau$ : search time per action).\label{alg:umaxn}}
\end{algorithm}
\begin{algorithm}[!bh]
\DontPrintSemicolon\SetAlgoNoEnd

\SetKwFunction{decisioncompletionumaxns}{\emph{$\umaxns$}}\SetKwFunction{time}{time}\SetKwFunction{completedsafestactionn}{safest\_action\_n}\SetKwProg{myproc}{Function}{}{}

\myproc{\completedsafestactionn{$s$, $T$}}{

return ${\displaystyle \argmax_{s'\in\actions\left(s\right)}\left(c\left(s'\right)_{\joueur\left(s\right)},n\left(s,s'\right),v\left(s'\right)_{\joueur\left(s\right)}\right)}$\;

}

\;

\myproc{\decisioncompletionumaxns{$s$, $\Spartiel$, $T$, $\fadaptn$,
$\fterminaln$, $\tau$}}{

$t=$ \time{}\;

\lWhile{\time{}$-\,t<\tau\wedge r\left(s\right)=0$}{\completionubfmsiteration{$s$,
$\Spartiel$, $T$, $\fadaptn$, $\fterminaln$}}

return \completedsafestactionn{$s$, $T$}\;

}\;

\caption{The algorithm\emph{ $\protect\umaxns$} (see Section~\ref{subsec:Definition-multi}
for the definitions of symbols ; see Algorithm~\ref{alg:umaxn_iteration}
for the code of $\protect\umaxn$\_iteration($s$, $\protect\Spartiel$,
$T$, $\protect\fadaptn$, $\protect\fterminaln$) ; time() returns
the current time in seconds ; $\tau$ : search time per action).\label{alg:umaxns}}
\end{algorithm}

\begin{algorithm}[!bh]
\DontPrintSemicolon\SetAlgoNoEnd

\SetKwFunction{completiondescenteiteration}{$\descenten$\_iteration}

\SetKwFunction{backupresolutionn}{backup\_resolution\_n} \SetKwProg{myproc}{Function}{}{}

\myproc{\completiondescenteiteration{$s$, $\Spartiel$, $T$,
$\fadaptn$, $\fterminaln$}}{

\eIf{$\terminal\left(s\right)$}{

$\Spartiel\leftarrow\Spartiel\cup\{s\}$\;

$r\left(s\right),c\left(s\right),v\left(s\right)\leftarrow1,\fbinn\left(s\right),\fterminaln\left(s\right)$

}{

\If{$s\notin\Spartiel$}{

$\Spartiel\leftarrow\Spartiel\cup\{s\}$\;

\ForEach{$s'\in\actions\left(s\right)$}{

\eIf{$\terminal\left(s'\right)$}{

$\Spartiel\leftarrow\Spartiel\cup\{s'\}$\;

$r\left(s'\right),c\left(s'\right),v\left(s'\right)\leftarrow1,\fbinn\left(s'\right),\fterminaln\left(s'\right)$\;

}{

\If{$s'\notin\Spartiel$}{

$r\left(s'\right),c\left(s'\right),v\left(s'\right)\leftarrow0,\left(0,\ldots,0\right),\fadaptn\left(s'\right)$\;

}

}

}

$s'\leftarrow$ \completedbestactionn{$s$, $\actions\left(s\right)$}\;

$c'(s),v(s)\leftarrow c\left(s'\right),v\left(s'\right)$\;

$r\left(s\right)\leftarrow$ \backupresolutionn{$s$}\;

\If{$r\left(s\right)$}{

$c(s)\leftarrow c'\left(s\right)$\;

}

}

\If{$r\left(s\right)=0$}{

$A\leftarrow\liste{s'\in\actions\left(s\right)}{r\left(s'\right)=0}$

$s'\leftarrow$ \completedbestactionndual{$s$, $A$}\;

$n(s,s')\leftarrow n(s,s')+1$\;

\completiondescenteiteration{$s'$, $\Spartiel$, $T$, $\fadaptn$,
$\fterminaln$}\;

$s'\leftarrow$ \completedbestactionn{$s$, $\actions\left(s\right)$}\;

$c'(s),v(s)\leftarrow c\left(s'\right),v\left(s'\right)$\;

$r\left(s\right)\leftarrow$ \backupresolutionn{$s$}\;

\If{$r\left(s\right)$}{

$c(s)\leftarrow c'\left(s\right)$\;

}

}}

}

\;

\protect\protect

\caption{Iteration algorithm of\emph{ $\protect\descenten$} with completion
(see Section~\ref{subsec:Definition-multi} for the definitions of
symbols, Algorithm~\ref{alg:best_action_n} for the definitions of
completed\_best\_action\_n($s$) and Algorithm~\ref{alg:backup_n}
for the definitions of backup\_resolution\_n($s$)). Note: $T=(v,c,r)$
and each $c\left(s\right)$ is initialized to $\left(0,\ldots,0\right)$.\label{alg:descente-n_iteration}}
\end{algorithm}

\begin{algorithm}[!bh]
\DontPrintSemicolon\SetAlgoNoEnd

\SetKwFunction{decisioncompletionudescenten}{\emph{$\descenten$}}\SetKwFunction{completiondescenten}{iteration\_\emph{$\descenten$}}\SetKwFunction{time}{time}\SetKwFunction{bestexplorationactionn}{best\_exploration\_action\_n}\SetKwFunction{completedbestexplorationactionn}{completed\_best\_exploration\_action\_n}\SetKwProg{myproc}{Function}{}{}

\myproc{\completedbestexplorationactionn{$s$, $T$}}{

\eIf{$r\left(s\right)$}{

return ${\displaystyle \argmax_{s'\in\actions\left(s\right)}\left(c\left(s'\right)_{\joueur\left(s\right)},v\left(s'\right)_{\joueur\left(s\right)}\right)}$\;

}{

return \bestexplorationactionn{$s$, $T$}\;

}

}

\;

\myproc{\decisioncompletionudescenten{$s$, $\Spartiel$, $T$,
$\fadaptn$, $\fterminaln$, $\tau$}}{

$t=$ \time{}\;

\lWhile{\time{}$-\,t<\tau\wedge r\left(s\right)=0$}{\completiondescenten{$s$,
$\Spartiel$, $T$, $\fadaptn$, $\fterminaln$}}

return \completedbestexplorationactionn{$s$, $T$}\;

}\;

\caption{The\emph{ $\protect\descenten$} algorithm (see Section~\ref{subsec:Definition-multi}
for the definitions of symbols ; see Algorithm~\ref{alg:descente-n_iteration}
for the code of \emph{$\protect\descenten$}\_iteration($s$, $S$,
$T$, $\protect\fadaptn$, $\protect\fterminaln$) ; best\_exploration\_action\_n$\left(s,T\right)$
is an exploration method, for example an action selection distribution
(see~\cite{2020learning})).\label{alg:descente_n}}
\end{algorithm}

\begin{rem}
Some variants of $\umaxn$ and $\descenten$ are possible, which perhaps
have a practical interest, at the cost of completeness. First, we
can consider that a state is resolved as soon as $c'\left(s\right)_{\joueur\left(s\right)}=1$
(instead of imposing in addition $v\left(s\right)_{\joueur\left(s\right)}=\max_{s'\in\liste{s'\in S}{\terminal\left(s'\right)}}\fterminaln\left(s\right)_{\joueur\left(s\right)}$).
Another variant is to stop an iteration of descent when the best action
leads to a draw, i.e. choosing $s'\leftarrow$ \completedbestactionn{$s$,
$\actions\left(s\right)$} instead of $s'\leftarrow$ \completedbestactionn{$s$,
$A$}. This could be interesting in the context of games with a lot
of draws.

Regarding the choice of the action to play after having carried out
a possibly partial search, an alternative to the best action can be
\[
\argmax_{s'\in\actions\left(s\right)}\left(c\left(s'\right)_{\joueur\left(s\right)},c'\left(s'\right)_{\joueur\left(s\right)},v\left(s'\right)_{\joueur\left(s\right)},n\left(s,s'\right)\right)
\]
 and an alternative to the safest action may be 
\[
\argmax_{s'\in\actions\left(s\right)}\left(c\left(s'\right)_{\joueur\left(s\right)},c'\left(s'\right)_{\joueur\left(s\right)},n\left(s,s'\right),v\left(s'\right)_{\joueur\left(s\right)}\right).
\]

Finally, note that the criterion $v\left(s\right)_{\joueur\left(s\right)}=\max_{s'\in\liste{s'\in S}{\terminal\left(s'\right)}}\fterminaln\left(s\right)_{\joueur\left(s\right)}$,
can be replaced, while keeping the completeness, by a criterion of
local maximum compared to the terminal states which are descendants
of $s$, if the local maximum is known, i.e. $v\left(s\right)_{\joueur\left(s\right)}=\max_{s'\in D_{\mathrm{t}}\left(s\right)}\fterminaln\left(s'\right)_{\joueur\left(s\right)}$
where 
\[
D_{\mathrm{t}}\left(s\right)=\liste{s'\in S}{\terminal\left(s'\right)\et\exists s_{1},\ldots,s_{k}\in\etats\,s'\in\actions\left(s_{k}\right)\wedge s_{k}\in\actions\left(s_{k-1}\right)\wedge\cdots\wedge s_{2}\in\actions\left(s_{1}\right)\wedge s_{1}\in\actions\left(s\right)}
\]
.
\end{rem}

\subsubsection{Proof of Completeness}

We now show that the two algorithms are complete. We start by formalizing
precisely what we call being complete in the context of these two
algorithms and then we establish the completeness result.
\begin{defn}
Let $\left(\etats,\actions,\terminal,\joueur,\fbin\right)$ be a perfect
multi-player game. Let $\fterminal$ be a terminal evaluation function
for this game.

The evaluation $\fterminaln$ is said to be tie-breaking for the game
if for all $s,s'\in\etats$ such that $\terminal\left(s\right)$ and
$\terminal\left(s'\right)$ , we have either for all $j\in\left\{ 1,\ldots,n\right\} $,
$\fbinn\left(s\right)_{j}=\fbinn\left(s'\right)_{j}\implies\fterminaln\left(s\right)_{j}\neq\fterminaln\left(s'\right)_{j}$
or $\fbinn\left(s\right)=\fbinn\left(s'\right)\et\fterminaln\left(s\right)=\fterminaln\left(s'\right)$.
\end{defn}

\begin{defn}
The value $\maxn$ of a state $s\in\etats$ with respect to the terminal
evaluation $\fterminaln$ (in the complete game tree $\etats$) is
the value $M\left(s\right)$ recursively defined by

\[
M\left(s\right)=\begin{cases}
M\left(\argmax_{s'\in\actions\left(s\right)}\left(M\left(s'\right)_{0,\joueur\left(s\right)},M\left(s'\right)_{1,\joueur\left(s\right)}\right)\right) & \text{si }\lnot\terminal\left(s\right)\\
\left(\fbin\left(s\right),\fterminaln\left(s\right)\right) & \text{si }\terminal\left(s\right)
\end{cases}.
\]
\end{defn}

\begin{lem}
Let $\left(\Spartiel,\actions\right)$ be a game tree built by the
algorithm $\umaxn$ or by the algorithm $\descenten$ from a certain
state and a tie-breaking terminal evaluation function for the game.
Let $s\in\Spartiel$ .\label{lem:Si-avant-n}

If $r\left(s\right)=1$ before an iteration of $\descente$ (resp.
$\ubfm$) then after the iteration, $r\left(s\right)$, $c\left(s\right)$
and $v\left(s\right)$ have not changed.
\end{lem}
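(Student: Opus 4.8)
The plan is to prove Lemma~\ref{lem:Si-avant-n} by examining the code of one iteration of $\umaxn$ (Algorithm~\ref{alg:umaxn_iteration}) and of $\descenten$ (Algorithm~\ref{alg:descente-n_iteration}) and checking that whenever a state $s$ already satisfies $r(s)=1$, none of the assignments executed during the iteration can modify $r(s)$, $c(s)$ or $v(s)$. This is the exact analogue of Lemma~\ref{lem:Si-avant} in the two-player case, with the extra bookkeeping that now $v(s)$ must also be shown invariant (because in the multi-player setting $v$ participates, via \backupresolutionn, in the resolution criterion, whereas in the two-player case only $c$ mattered for the statement).

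First I would set up an induction on the number of iterations already performed (so that the hypotheses of Lemma~\ref{lem:completion-resolution} and of the present lemma are available for all earlier stages), and then case-split on where $s$ sits in the recursion carried out by the current iteration. The key structural observation is that the iteration only rewrites the triple $(r(s),c(s),v(s))$ of a state $s$ in one of three situations: (i) $s$ is terminal and gets reinitialised to $(1,\fbinn(s),\fterminaln(s))$ — but this assignment is idempotent, since if $r(s)=1$ held before then $c(s)=\fbinn(s)$ and $v(s)=\fterminaln(s)$ were already the values it had been given when first inserted; (ii) $s$ is freshly added to $\Spartiel$, which cannot happen when $r(s)=1$ because a state with $r(s)=1$ has necessarily already been visited and inserted; (iii) $s$ lies on the principal (dual) variation explored by the recursion, in which case the code recomputes $c'(s),v(s)$ from a best child and then sets $r(s)\leftarrow\backupresolutionn(s)$, possibly followed by $c(s)\leftarrow c'(s)$. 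Situation (iii) is the one requiring real argument: I must show the recursion never descends into an $s$ with $r(s)=1$. For $\umaxn$ this is immediate, since the recursive call is made only on a child chosen from $A=\{s'\in\actions(s)\mid r(s')=0\}$, so every state entered by the recursion has resolution value $0$ at entry; the same guard appears in $\descenten$. Hence the recomputation block in (iii) is simply never reached for our $s$, and the only assignments touching $s$ are the harmless idempotent ones of (i).

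The remaining point is to rule out an \emph{indirect} change: could the iteration change the value of some state through an assignment that, although written for a different state in the recursion, happens to coincide with $s$? It cannot, because each line of both algorithms writes only to the state currently passed as the first argument of the recursive call (or to its immediate children during the expansion step), and we have just argued $s$ is never that argument once $r(s)=1$, nor is it re-expanded as a child (children are only initialised when the parent is itself being added to $\Spartiel$ for the first time, and a parent of an already-resolved $s$ was added long ago). Finally I would note that the hypothesis that $\fterminaln$ is tie-breaking, and the acyclicity of $(\etats,\actions)$, are not actually needed for this particular lemma — they are inherited from the standing assumptions and will matter in the later propositions — so the proof is a direct inspection of the pseudocode; the only mild subtlety, and the step I expect to spend the most care on, is making precise that "$r(s)=1$ implies $s\in\Spartiel$ and $s$ has already been expanded," which I would extract as an invariant maintained by every iteration (every state ever assigned $r=1$ was, at that moment, either terminal or had all children in $\Spartiel$), so that the "freshly added" and "children being initialised" cases genuinely cannot recur.

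\begin{proof}
By inspection of Algorithm~\ref{alg:umaxn_iteration} and Algorithm~\ref{alg:descente-n_iteration}, arguing by induction on the number of iterations already performed. Assume $r(s)=1$ before the iteration. A triple $(r(t),c(t),v(t))$ is written during an iteration only when $t$ is the first argument of a (possibly recursive) call or when $t$ is a child of such a first argument being expanded for the first time. If $t=s$ is terminal, the only assignment is $(r(s),c(s),v(s))\leftarrow(1,\fbinn(s),\fterminaln(s))$, which by the induction hypothesis equals the value $s$ already holds, so nothing changes. If $s$ is not terminal: $s$ cannot be freshly added to $\Spartiel$, nor expanded as a new child, since $r(s)=1$ entails (by the definition and use of \backupresolutionn, and by the induction hypothesis) that $s\in\Spartiel$ and has already been expanded; and $s$ cannot be the first argument of the recursive call, because that call is made only on an element of $A=\{s'\in\actions(\cdot)\mid r(s')=0\}$, whereas $r(s)=1$. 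Hence no assignment of the iteration writes to $s$, and $r(s)$, $c(s)$, $v(s)$ are unchanged after the iteration.
\end{proof}
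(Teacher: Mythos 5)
Your proof is correct; the paper actually states this lemma without any proof (treating it as immediate from the pseudocode), and your argument supplies exactly the inspection the author leaves implicit: terminal reassignments are idempotent, a state with $r(s)=1$ is already in $\Spartiel$ and already expanded, and the recursion never takes a resolved state as its argument since the guards restrict descent to $A=\liste{s'}{r(s')=0}$ (and the outer loop to unresolved roots). Your side remarks --- that the tie-breaking hypothesis is not needed here and that the delicate point is the invariant ``$r(s)=1$ implies $s$ was already inserted and expanded'' --- are also accurate.
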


\begin{prop}
\label{lem:completion-resolution-n-1-1}Let $\left(\Spartiel,\actions\right)$
be a game tree built by the algorithm $\umaxn$ or by the algorithm
$\descenten$ from a certain state and a tie-breaking terminal evaluation
function for the game. Let $s\in\Spartiel$. We have the following
property:
\begin{itemize}
\item if $r\left(s\right)=1$ then either $c'\left(s\right)_{\joueur\left(s\right)}=1$
and $v\left(s\right)_{\joueur\left(s\right)}=\max_{s'\in\liste{s'\in S}{\terminal\left(s'\right)}}\fterminaln\left(s'\right)_{\joueur\left(s\right)}$
or for all $s'\in\actions\left(s\right)$, $r\left(s'\right)=1$ ;
\item if $c'\left(s\right)_{\joueur\left(s\right)}=1$ and $v\left(s\right)_{\joueur\left(s\right)}=\max_{s'\in\liste{s'\in S}{\terminal\left(s'\right)}}\fterminaln\left(s'\right)_{\joueur\left(s\right)}$
then $r\left(s\right)=1$ ;
\item if for all $s'\in\actions\left(s\right)$, $r\left(s'\right)=1$ then
$r\left(s\right)=1$.
\end{itemize}
\end{prop}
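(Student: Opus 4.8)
The plan is to follow the proof of the two‑player Lemma~\ref{lem:completion-resolution} almost verbatim: each of the three items is to be read off directly from the body of backup\_resolution\_n($s$) together with the freezing guarantee of Lemma~\ref{lem:Si-avant-n}. Formally I would argue by induction on the number $k$ of iterations of $\umaxn$ (resp. $\descenten$) already applied to the initial state, the case $k=0$ being immediate.

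First I would record the elementary facts used throughout. (i) If $s$ is terminal then $r\left(s\right)=1$ and $\actions\left(s\right)=\emptyset$, so items~1 and~3 hold vacuously (the clause ``for all $s'\in\actions\left(s\right)$'' is empty) and item~2 holds because its conclusion is true; from now on $s$ is non‑terminal. (ii) Every assignment to $r\left(s\right)$ performed for a non‑terminal $s$ has the form $r\left(s\right)\leftarrow$ backup\_resolution\_n($s$), and it is immediately preceded by $c'\left(s\right),v\left(s\right)\leftarrow c\left(s'\right),v\left(s'\right)$ with $s'=$ best\_action\_n($s$,$\actions\left(s\right)$); hence at the instant $r\left(s\right)$ is (re)computed it equals $1$ exactly when $c'\left(s\right)_{\joueur\left(s\right)}=1\et v\left(s\right)_{\joueur\left(s\right)}=\max_{s'\in\liste{s'\in S}{\terminal\left(s'\right)}}\fterminaln\left(s'\right)_{\joueur\left(s\right)}$, and equals $\min_{s'\in\actions\left(s\right)}r\left(s'\right)$ otherwise. (iii) By Lemma~\ref{lem:Si-avant-n}, once $r\left(s\right)=1$ the triple $r\left(s\right),c\left(s\right),v\left(s\right)$ never changes again; the same argument---a resolved child is never returned by best\_action\_n\_dual, so $s$ is never descended into again once it is resolved---shows $c'\left(s\right)$ is frozen as well from that point on. (iv) Resolution values are monotone: no $r\left(\cdot\right)$ ever passes from $1$ back to $0$ (again by Lemma~\ref{lem:Si-avant-n}).

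Granting (i)--(iv), each item is a one‑line unfolding. For item~2, if the current values satisfy the hypothesis, then since $c'\left(s\right)$ and $v\left(s\right)$ were last assigned together and immediately followed by $r\left(s\right)\leftarrow$ backup\_resolution\_n($s$) evaluated on precisely those values, that call returned $1$, so $r\left(s\right)=1$ then, hence now by (iii). For item~3, if every child currently satisfies $r\left(s'\right)=1$, look at the last iteration in which $s$ was processed through its ``$s\in\Spartiel$'' branch: there backup\_resolution\_n($s$) returned either $1$ (first clause) or $\min_{s'}r\left(s'\right)$, and in the latter case this minimum was already $1$ (a non‑terminal child resolved at a later iteration would have forced $s$ to be processed through that branch again, contradicting maximality, while terminal children are resolved from the creation of $s$), so $r\left(s\right)=1$ then, still now by (iii). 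For item~1, if $r\left(s\right)=1$ now, the iteration that set it invoked backup\_resolution\_n($s$) returning $1$: either its first clause held, giving $c'\left(s\right)_{\joueur\left(s\right)}=1$ and $v\left(s\right)_{\joueur\left(s\right)}=\max_{s'\in\liste{s'\in S}{\terminal\left(s'\right)}}\fterminaln\left(s'\right)_{\joueur\left(s\right)}$, values frozen ever since by (iii); or $\min_{s'}r\left(s'\right)=1$ then, hence all children resolved now by (iv)---exactly the claimed disjunction.

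The main obstacle is the bookkeeping at the very first insertion of a non‑terminal state. In $\umaxn$ (unlike $\descenten$, whose ``$s\notin\Spartiel$'' branch already calls backup\_resolution\_n($s$)) that branch only creates $s$ and its children without resolving $s$, so for one transient window $s$ may already have all its children resolved (e.g.\ all terminal) while $r\left(s\right)=0$, which momentarily contradicts item~3. I would handle this exactly as in the two‑player case: separate the two algorithms and observe that for $\umaxn$ such an $s$, being unresolved, is entered through its ``$s\in\Spartiel$'' branch at the next iteration that reaches it, where $A=\emptyset$ forces $r\left(s\right)\leftarrow1$; hence the window never persists and is harmless for every later use of the proposition, just as for Lemma~\ref{lem:completion-resolution}. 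Alternatively one may simply phrase the statement for states already evaluated once via backup\_resolution\_n. Note finally that the tie‑breaking hypothesis on $\fterminaln$ plays no role in this proposition---it is needed only later, for uniqueness of the $\maxn$ value---so I would keep it in the statement purely for uniformity with the rest of the section.
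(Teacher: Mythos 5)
Your proof is correct and follows the same route as the paper's, whose entire argument is a single sentence appealing to the definition and use of backup\_resolution\_n($s$) together with the freezing property of Lemma~\ref{lem:Si-avant-n}; you simply unfold that appeal in full detail (terminal base case, last assignment of $c'\left(s\right),v\left(s\right)$ immediately followed by the resolution backup, monotonicity of $r$). The transient window you flag for $\umaxn$ --- a freshly inserted non-terminal state all of whose children are already resolved while $r\left(s\right)$ is still $0$ --- is a genuine edge case for the third item that the paper silently ignores, and your handling of it is sound.
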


\begin{proof}
By definition of the algorithm (in particular by the definition and
by the use of the method backup\_resolution\_n($s$) and because as
soon as we have $r\left(s\right)=1$, $r\left(s\right)$, $v\left(s\right)$,
$c'\left(s\right)$ and $c\left(s\right)$ do not change anymore (Lemma~\ref{lem:Si-avant-n})).
\end{proof}
\begin{lem}
\label{lem:resolution-si-component-max}Let $\left(\Spartiel,\actions\right)$
be a game tree built by the algorithm $\umaxn$ or by the algorithm
$\descenten$ from a certain state. Let $s\in\Spartiel$ . 

If there exists $j$ such that $c\left(s\right)_{j}=1$ then $r\left(s\right)=1$.
\end{lem}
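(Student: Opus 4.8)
The plan is to prove the statement by inspecting how the value $c\left(s\right)$ of a state can ever acquire a non-zero component, and then to transport the conclusion through time using Lemma~\ref{lem:Si-avant-n}. First I would treat the terminal case separately: if $\terminal\left(s\right)$, then on the iteration that adds $s$ to $\Spartiel$ the algorithm sets $r\left(s\right)\leftarrow 1$ (this is also what happens when $s$ is discovered as a terminal child of another state), so $r\left(s\right)=1$ holds from then on and the claim is immediate, whatever $c\left(s\right)$ is.

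For the non-terminal case, the key observation is structural: by the initialization convention stated with Algorithm~\ref{alg:umaxn_iteration} and Algorithm~\ref{alg:descente-n_iteration}, $c\left(s\right)$ is initially $\left(0,\ldots,0\right)$, and in both iteration procedures the only assignment to $c\left(s\right)$ for a non-terminal $s$ is the line $c\left(s\right)\leftarrow c'\left(s\right)$, which occurs only inside the conditional block guarded by $r\left(s\right)$, executed immediately after the call to backup\_resolution\_n that sets $r\left(s\right)$. (In $\descenten$ this update block appears twice --- once before and once after the recursive descent call --- but each occurrence is guarded the same way; the branch that merely initializes a freshly discovered child sets $c\left(s'\right)\leftarrow\left(0,\ldots,0\right)$ and so never introduces a non-zero component.) Consequently, if at the current moment $c\left(s\right)_{j}=1$ for some $j$, and in particular $c\left(s\right)\neq\left(0,\ldots,0\right)$, then at some earlier iteration the assignment $c\left(s\right)\leftarrow c'\left(s\right)$ was performed, and at that point we had $r\left(s\right)=1$.

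It remains to carry $r\left(s\right)=1$ forward to the present state of the tree, which is exactly what Lemma~\ref{lem:Si-avant-n} provides: once $r\left(s\right)=1$ holds before an iteration it still holds after it (with $c\left(s\right)$ and $v\left(s\right)$ frozen), hence it holds after every subsequent iteration. Therefore $r\left(s\right)=1$ under the hypothesis, as required.

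I do not anticipate a genuine difficulty here; the only care needed is to be exhaustive in enumerating the assignments to $c\left(s\right)$ in the pseudocode of both $\umaxn$ and $\descenten$ (including the doubled update block in $\descenten$), and to check that none of them, apart from the guarded $c\left(s\right)\leftarrow c'\left(s\right)$ and the terminal initialization $c\left(s\right)\leftarrow\fbin\left(s\right)$, can make a component of $c\left(s\right)$ equal to $1$.
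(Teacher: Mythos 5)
Your argument is correct and is essentially the paper's own proof in expanded form: the paper simply asserts the invariant $r\left(s\right)=0\implies c\left(s\right)=\left(0,\ldots,0\right)$ ``by definition of the algorithm'', which is exactly what you establish by enumerating the assignments to $c\left(s\right)$ (terminal initialization versus the update $c\left(s\right)\leftarrow c'\left(s\right)$ guarded by $r\left(s\right)$) and carrying $r\left(s\right)=1$ forward via Lemma~\ref{lem:Si-avant-n}. No gap; your version just makes the case analysis explicit.
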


\begin{proof}
By definition of the algorithm, we have $r\left(s\right)=0\implies c\left(s\right)=\left(0,\ldots,0\right)$.
\end{proof}
\begin{prop}
\label{prop:exact_n}Let $\fterminaln$ be a tie-breaking terminal
evaluation function. Let $\left(\Spartiel,\actions\right)$ be a game
tree built by $\umaxn$ or $\descenten$ from a certain state using
$\fterminal$. Let $s\in\Spartiel$. 

If $r\left(s\right)=1$ then there exists a unique value $\maxn$
of $s$with respect to $\fterminal$, denoted by $M\left(s\right)$,
and we have $\left(c\left(s\right),v\left(s\right)\right)=M\left(s\right)$.
\end{prop}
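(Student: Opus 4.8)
The plan is to prove the proposition by induction on the maximal length of a directed path from $s$ to a terminal state, after first establishing as a preliminary — by the analogous induction, over all states of $\etats$ — that under the tie-breaking hypothesis the recursive definition of $M$ determines a unique value $M(s)$ for every state, independently of how ties in the $\argmax$ are broken. This preliminary is where the new hypothesis is indispensable and is, I expect, the main obstacle.

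\emph{Uniqueness of $M$.} I would show, by induction on the depth measure, that for every $s\in\etats$ the recursion yields a unique value $M(s)$, and moreover $M(s)=\bigl(\fbinn(t),\fterminaln(t)\bigr)$ for some terminal state $t$ reachable from $s$ (with $t=s$ when $\terminal(s)$). The base case is immediate. For non-terminal $s$, the only freedom is the choice of maximizer in $\argmax_{s'\in\actions(s)}\bigl(M(s')_{0,\joueur(s)},M(s')_{1,\joueur(s)}\bigr)$: if $s_{1},s_{2}\in\actions(s)$ both realize the maximum, write $M(s_{i})=(\fbinn(t_{i}),\fterminaln(t_{i}))$ by the induction hypothesis; then $\fbinn(t_{1})_{\joueur(s)}=\fbinn(t_{2})_{\joueur(s)}$ and $\fterminaln(t_{1})_{\joueur(s)}=\fterminaln(t_{2})_{\joueur(s)}$, so the tie-breaking property forces $\fbinn(t_{1})=\fbinn(t_{2})$ and $\fterminaln(t_{1})=\fterminaln(t_{2})$, i.e.\ $M(s_{1})=M(s_{2})$; thus $M(s)$ is independent of the chosen maximizer. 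Write $B(s)$ for the non-empty set of children of $s$ realizing that maximum, so that $M(s)=M(s')$ for all $s'\in B(s)$.

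\emph{Proof of the proposition.} I would then induct on the depth of $s\in\Spartiel$. If $\terminal(s)$, then $c(s)=\fbinn(s)$, $v(s)=\fterminaln(s)$ and $M(s)=(\fbinn(s),\fterminaln(s))$, so the claim holds. Suppose instead $s$ is non-terminal with $r(s)=1$, and let $k$ be the (unique, by Lemma~\ref{lem:Si-avant-n}) iteration at which $r(s)$ turned from $0$ to $1$. At iteration $k$ the update block for $s$ chose $s^{*}\leftarrow$ best\_action\_n$(s,\actions(s))$, set $c'(s),v(s)\leftarrow c(s^{*}),v(s^{*})$, obtained $1$ from backup\_resolution\_n$(s)$, and set $c(s)\leftarrow c'(s)$; afterwards $r(s)$, $c(s)$, $v(s)$ never change (Lemma~\ref{lem:Si-avant-n}). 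Applying Proposition~\ref{lem:completion-resolution-n-1-1} immediately after that block, where $c'(s)=c(s^{*})$ and $v(s)=v(s^{*})$, leaves two cases. \textbf{Case (i):} all children of $s$ are resolved at iteration $k$. They then remain resolved with $c,v$ frozen, so the induction hypothesis gives $(c(s'),v(s'))=M(s')$ for every $s'\in\actions(s)$; since $s^{*}$ maximizes $\bigl(c(s')_{\joueur(s)},v(s')_{\joueur(s)},n(s,s')\bigr)$ over the children, $s^{*}\in B(s)$, whence $M(s)=M(s^{*})=(c(s^{*}),v(s^{*}))=(c(s),v(s))$. \textbf{Case (ii):} $c(s^{*})_{\joueur(s)}=1$ and $v(s^{*})_{\joueur(s)}=\mu$, where $\mu=\max_{s'\in\liste{s'\in S}{\terminal\left(s'\right)}}\fterminaln(s')_{\joueur(s)}$. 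Then $r(s^{*})=1$ by Lemma~\ref{lem:resolution-si-component-max}, so $s^{*}$ is resolved with frozen values and $(c(s^{*}),v(s^{*}))=M(s^{*})$ by the induction hypothesis; hence $M(s^{*})_{0,\joueur(s)}=1$ and $M(s^{*})_{1,\joueur(s)}=\mu$. By the uniqueness step, every child $s'$ has $M(s')_{0,\joueur(s)}\leq 1$ (an entry of an $\fbinn$-vector) and $M(s')_{1,\joueur(s)}\leq\mu$ (an entry of an $\fterminaln$-vector of a terminal state); so $\bigl(M(s^{*})_{0,\joueur(s)},M(s^{*})_{1,\joueur(s)}\bigr)$ is lexicographically maximal over the children, i.e.\ $s^{*}\in B(s)$, and again $M(s)=M(s^{*})=(c(s^{*}),v(s^{*}))=(c(s),v(s))$.

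The argument is the same for $\umaxn$ and $\descenten$, as both use best\_action\_n, backup\_resolution\_n and the conditional update $c(s)\leftarrow c'(s)$ in the same way and both obey Lemma~\ref{lem:Si-avant-n}. The principal difficulty is the uniqueness step: without the tie-breaking hypothesis the $\argmax$ defining $M$ may be realized by children whose $\maxn$-values agree only in the coordinate of the player to move, so $M$ is genuinely multi-valued (the king-making effect) and the statement fails. A lesser subtlety is that the equalities $c(s)=c(s^{*})$ and $v(s)=v(s^{*})$ hold literally only at iteration $k$, so Lemma~\ref{lem:Si-avant-n} must be used to carry them over to the current tree; in Case (ii) this first requires Lemma~\ref{lem:resolution-si-component-max}, to ensure that $s^{*}$ was already resolved at iteration $k$ and hence has frozen values as well.
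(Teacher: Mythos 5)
Your proof is correct and follows essentially the same route as the paper's: induction from terminal states, the case split supplied by Proposition~\ref{lem:completion-resolution-n-1-1}, and the use of Lemmas~\ref{lem:Si-avant-n} and~\ref{lem:resolution-si-component-max} to freeze the values of the resolving child and invoke the induction hypothesis on it. The one genuine addition is your separate preliminary establishing that the tie-breaking hypothesis makes $M$ well defined (via $M(s)=(\fbinn(t),\fterminaln(t))$ for a reachable terminal $t$); the paper asserts this uniqueness in the statement but only handles it in passing inside the main induction, so making it explicit is an improvement rather than a divergence.
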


\begin{proof}
Let $\left(\Spartiel,\actions\right)$ be a game tree built by $\umaxn$
(resp. $\descenten$) from a certain state (i.e. the algorithm has
been applied $k$ times on that state). We show this property by induction.
Let $s\in\Spartiel$ such that $r\left(s\right)=1$. We first show
that this property holds for terminal states.

Suppose in addition that $\terminal\left(s\right)$ is true. Thus,
$\left(c\left(s\right),v\left(s\right)\right)=\left(\fbin\left(s\right),\fterminaln\left(s\right)\right)$
and therefore $\left(c\left(s\right),v\left(s\right)\right)=\left(\fbin\left(s\right),\fterminaln\left(s\right)\right)=M\left(s\right)$.

We now show this property for non-terminal states: we suppose instead
that $\terminal\left(s\right)$ is false.

Since $r\left(s\right)=1$, we have either for all $s'\in\actions\left(s\right)$,
$r\left(s'\right)=1$ or $c'\left(s\right)_{\joueur\left(s\right)}=1$
and $v\left(s\right)_{\joueur\left(s\right)}=\max_{s'\in\liste{s'\in S}{\terminal\left(s'\right)}}\fterminaln\left(s'\right)_{\joueur\left(s\right)}$,
by Lemma~\ref{lem:completion-resolution-n-1-1}. We also have $c\left(s\right)=c'\left(s\right)$.

If for all $s'\in\actions\left(s\right)$, $r\left(s'\right)=1$,
then by induction, we have for all $s'\in\actions\left(s\right)$,
$\left(c\left(s'\right),v\left(s'\right)\right)=M\left(s'\right)$.
But $c\left(s\right)=c\left(\argmax_{s'\in\actions\left(s\right)}\left(c\left(s'\right)_{\joueur\left(s\right)},v\left(s'\right)_{\joueur\left(s\right)}\right)\right)$
and $v\left(s\right)=v\left(\argmax_{s'\in\actions\left(s\right)}\left(c\left(s'\right)_{\joueur\left(s\right)},v\left(s'\right)_{\joueur\left(s\right)}\right)\right)$,
since there is a unique pair $\left(c\left(s'\right),v\left(s'\right)\right)$
maximizing $\left(c\left(s'\right)_{\joueur\left(s\right)},v\left(s'\right)_{\joueur\left(s\right)}\right)$
(as $\fterminaln$ is tie-breaking and by Lemma~\ref{lem:Si-avant-n}).
Therefore $\left(c\left(s\right),v\left(s\right)\right)=M\left(\argmax_{s'\in\actions\left(s\right)}\left(M\left(s'\right)_{0,\joueur\left(s\right)},M\left(s'\right)_{1,\joueur\left(s\right)}\right)\right)$,
hence $\left(c\left(s\right),v\left(s\right)\right)=M\left(s\right)$.

If $c'\left(s\right)_{\joueur\left(s\right)}=1$ and $v\left(s\right)_{\joueur\left(s\right)}=\max_{s'\in\liste{s'\in S}{\terminal\left(s'\right)}}\fterminaln\left(s'\right)_{\joueur\left(s\right)}$,
there exists $\tilde{s}\in\actions\left(s\right)$ such that we had
$c'\left(s\right)=c\left(\tilde{s}\right)$ and $v\left(s\right)=v\left(\tilde{s}\right)$
at the iteration that marked $s$ as resolved. Thus, at this iteration
$c\left(\tilde{s}\right)_{\joueur\left(s\right)}=1$ and therefore
$r\left(\tilde{s}\right)=1$, by Lemma~\ref{lem:resolution-si-component-max}.
Thus, we still have $c'\left(s\right)=c\left(\tilde{s}\right)$, $v\left(s\right)=v\left(\tilde{s}\right)$
and $r\left(\tilde{s}\right)=1$ (Lemma~\ref{lem:Si-avant-n}). By
induction, $\left(c\left(\tilde{s}\right),v\left(\tilde{s}\right)\right)=M\left(\tilde{s}\right)$
and therefore $\left(c\left(s\right),v\left(s\right)\right)=\left(c'\left(s\right),v\left(s\right)\right)=M\left(\tilde{s}\right)$.
But, since $M\left(\tilde{s}\right)$ is maximum for the player $\joueur\left(s\right)$,
for all $s'\in\actions\left(s\right)$ either $\left(M\left(\tilde{s}\right)_{0,\joueur\left(s\right)},M\left(\tilde{s}\right)_{1,\joueur\left(s\right)}\right)>\left(M\left(s'\right)_{0,\joueur\left(s\right)},M\left(s'\right)_{1,\joueur\left(s\right)}\right)$
or $M\left(\tilde{s}\right)=M\left(s'\right)$ (as $\fterminaln$
is tie-breaking). Thus, 
\[
\left(c\left(s\right),v\left(s\right)\right)=M\left(\argmax_{s'\in\actions\left(s\right)}\left(M\left(s'\right)_{0,\joueur\left(s\right)},M\left(s'\right)_{1,\joueur\left(s\right)}\right)\right).
\]
Hence $\left(c\left(s\right),v\left(s\right)\right)=M\left(s\right)$.
\end{proof}

\begin{prop}
\label{prop:arret_n}Let $S$ be the set of states of a perfect multi-player
game. There exists $N\in\mathbb{N}$ such that after applying $N$
times the algorithm $\descenten$ (resp. $\umaxn$) on any state $s\in S$,
we have $r\left(s\right)=1$.
\end{prop}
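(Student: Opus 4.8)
The plan is to follow the proof of Proposition~\ref{prop:arret} almost verbatim, taking $N=2\left|S\right|$. First I would dispose of the easy cases: if $s$ is terminal then one iteration sets $r\left(s\right)=1$, and if $r\left(s\right)=1$ already then Lemma~\ref{lem:Si-avant-n} keeps it so; hence we may assume $\lnot\terminal\left(s\right)$ and $r\left(s\right)=0$. The core of the argument is the potential
\[
\Phi=\left|\Spartiel\right|+\left|\liste{s'\in\Spartiel}{r\left(s'\right)=1}\right|,
\]
which never exceeds $2\left|S\right|$ since $\Spartiel\subseteq S$, which is non-decreasing along the run of the algorithm (states never leave $\Spartiel$, and by Lemma~\ref{lem:Si-avant-n} a resolution value never drops from $1$ to $0$), and which I claim strictly increases during every iteration that starts with $r\left(s\right)=0$.

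To prove this claim I would trace the recursive descent performed by an iteration of either algorithm. Write $\tilde{s}$ for the state currently examined, $\tilde{s}=s$ initially. The key invariant is that $r\left(\tilde{s}\right)=0$ at the beginning of every recursive call: for $\umaxn$ the recursion is only ever entered on a child belonging to $A=\liste{s'\in\actions\left(\tilde{s}\right)}{r\left(s'\right)=0}$, and for $\descenten$ one moreover uses that, by definition of backup\_resolution\_n (the last item of Proposition~\ref{lem:completion-resolution-n-1-1}), having $r\left(\tilde{s}\right)=0$ after the ``$\tilde{s}\in\Spartiel$'' block forces $A\neq\emptyset$, so the descent continues exactly while $r\left(\tilde{s}\right)=0$. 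In particular $\tilde{s}$ is never terminal when a child is chosen, and since the game graph is finite and acyclic the descent terminates. It can therefore only stop in one of two ways: either the examined $\tilde{s}$ was not yet in $\Spartiel$ and gets added to it, so $\left|\Spartiel\right|$ increases; or $\tilde{s}$ has all its children resolved, whereupon backup\_resolution\_n returns $1$ and $\tilde{s}$ passes from $r\left(\tilde{s}\right)=0$ to $r\left(\tilde{s}\right)=1$, so the second summand of $\Phi$ increases. Either way $\Phi$ strictly increases.

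Finally I would conclude as in Proposition~\ref{prop:arret}: since $\Phi$ is bounded by $2\left|S\right|$ and strictly increases on each iteration starting with $r\left(s\right)=0$, there can be at most $2\left|S\right|$ such iterations, so after $N=2\left|S\right|$ iterations (using Lemma~\ref{lem:Si-avant-n} so that $r\left(s\right)$, once $1$, stays $1$) we necessarily have $r\left(s\right)=1$, uniformly over $s\in S$. Equivalently, if $\Phi$ reaches its maximum $2\left|S\right|$ then $\Spartiel=S$ and every state is resolved, so a bottom-up pass from the terminal leaves through the last item of Proposition~\ref{lem:completion-resolution-n-1-1} also yields $r\left(s\right)=1$. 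The main obstacle I anticipate is purely bookkeeping: verifying carefully, for both the $\umaxn$ and the $\descenten$ iteration bodies, that the examined state always carries $r=0$ at the moment a child is selected --- hence is non-terminal and has $A\neq\emptyset$ --- so that the recursion can only bottom out by adding a new state or by resolving one; once this is established, the counting is identical to the two-player case.
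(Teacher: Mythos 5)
Your proposal is correct and follows essentially the same argument as the paper: you show that every iteration starting with $r\left(s\right)=0$ either adds a new state to $\Spartiel$ or switches some resolution value from $0$ to $1$, by tracing the recursion and using that the examined state always has $r=0$ (hence is non-terminal and, for $\descenten$, has an unresolved child), and then you count. Your potential $\Phi=\left|\Spartiel\right|+\left|\liste{s'\in\Spartiel}{r\left(s'\right)=1}\right|$ is merely a tidier packaging of the paper's own bound $N=2\left|\etats\right|$, so there is nothing substantively different to compare.
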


\begin{proof}
We show that with at most $N=2\left|\etats\right|$ iterations of
$\descenten$ (resp. $\umaxn$) applied to a certain state $s\in\etats$,
we have $r\left(s\right)=1$. Note first that if $s$ is terminal
or satisfies $r\left(s\right)=1$. then after applying the algorithm,
we have $r\left(s\right)=1$. Now suppose that $s$ is not terminal
and satisfies $r\left(s\right)=0$. To show the proposition, we show
that each iteration adds in $\Spartiel$ at least one state of $\etats$
which is not in $\Spartiel$ or marks as solved an additional state,
i.e. a state $s'\in\etats$ satisfying $r\left(s'\right)=0$, satisfies
$r\left(s'\right)=1$ after the iteration. This is sufficient to show
the property, because either after one of the iterations, we have
$r\left(s\right)=1$ or the iterative application of the algorithm
ends up adding in $\Spartiel$ all descendants of $s$ and/or by marking
all states of $\Spartiel$ as resolved. Indeed, if all the descendants
of $s$ are added then necessarily $r\left(s\right)=1$ (since by
induction all descendants satisfy $r\left(s\right)=1$ ; by definition
and use of backup\_resolution\_n($s$)). Since $S$ is finite, with
at most $2\left|\etats\right|$ iterations, $r\left(s\right)=1$. 

We therefore show, under the assumption $r\left(s\right)=0$ et $\lnot\terminal\left(s\right)$,
that each iteration adds at least one new state of $S$ in $\Spartiel$
or change the value $r\left(s'\right)$ from $0$ to $1$ for a certain
state $s'\in\etats$. Let $\tilde{s}$ be the current state analyzed
by the algorithm (at the beginning $\tilde{s}=s$). If $\tilde{s}$
is not in $\Spartiel$, then $\tilde{s}$ is added in $\Spartiel$.
Otherwise for $\ubfm$ and then for $\descente$, the algorithm recursively
chooses the best child of the current state satisfying $r\left(\tilde{s}'\right)=0$,
which we denote $\tilde{s}'$. For $\umaxn$, this recursion is performed
until $\tilde{s}$ is not in $\Spartiel$ (and adds it) or that $\tilde{s}$
is terminal or that there is no child $\tilde{s}'$ satisfying $r\left(\tilde{s}'\right)=0$.
Given that $\tilde{s}$ necessarily satisfies $r\left(\tilde{s}\right)=0$
at the beginning of each recursion, $\tilde{s}$ is not terminal.
Therefore, this recursion is performed until $\tilde{s}$ is not in
$\Spartiel$ or that there is no child $\tilde{s}'$ satisfying $r\left(\tilde{s}'\right)=0$.
In the latter case, all the children $\tilde{s}'$ of the state $\tilde{s}$
satisfies $r\left(\tilde{s}'\right)=1$, and therefore at the end
of the iteration, we have $r\left(\tilde{s}\right)=1$ while at the
beginning we have $r\left(\tilde{s}\right)=0$. Thus, with $\umaxn$,
each iteration adds a new state in $\Spartiel$ or marks as solved
a new state. With $\descenten$, this recursion is performed until
the state $\tilde{s}$ is terminal or satisfies $r\left(\tilde{s}\right)=1$
after the block of the test ``$\tilde{s}\in\Spartiel$''. Note that
with $\descente,$ if $r\left(\tilde{s}\right)=0$ after the block
of the test ``$\tilde{s}\in\Spartiel$'', then there is always a
child $\tilde{s}'$ satisfying $r\left(\tilde{s}'\right)=0$ (otherwise
the block would have changed the value of $\tilde{s}$ to $r\left(\tilde{s}\right)=1$).
Since $r\left(\tilde{s}\right)=0$ at the start of each descent recursion
step (and that $s$ is not terminal), this recursion is performed
until the state $\tilde{s}$ satisfies $r\left(\tilde{s}\right)=1$
after the block of the test $\tilde{s}\in\Spartiel$. Thus, when this
iteration ends, before the test, we have $r\left(\tilde{s}\right)=0$
and after the test, we have $r\left(\tilde{s}\right)=1$. Therefore,
necessarily before the test, $\tilde{s}$ is not in $\Spartiel$ and
therefore $\tilde{s}$ is added. Thus, for the two algorithms, an
iteration adds at least one new state of $\etats$ in $\Spartiel$
marks as solved a new state (under the assumption that $s$ is neither
terminal nor solved).
\end{proof}
\begin{thm}
The algorithm $\descenten$ and the algorithm $\umaxn$ are ``complete'',
i.e. applying $\descenten$ (resp. $\umaxn$) on any state $s\in\etats$
by using a tie-breaking terminal evaluation $\fterminal$, with a
search time $\tau$ large enough , gives $r\left(s\right)=1$ and
$\left(c\left(s\right),v\left(s\right)\right)=M\left(s\right)$, the
unique value $\maxn$ of $s$ with respect to $\fterminal$.
\end{thm}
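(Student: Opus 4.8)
The plan is to assemble the theorem from the two propositions just proved, exactly mirroring the two-player argument at the end of Section~\ref{sec:deux-joueurs}. Since the game graph $\left(\etats,\actions\right)$ is finite, a single iteration of $\descenten$ (resp.\ $\umaxn$) visits only finitely many states and performs a bounded amount of work, so there is a uniform upper bound $\delta$ on the running time of one iteration. Hence choosing $\tau\ge 2\left|\etats\right|\,\delta$ guarantees that the \texttt{while} loop of Algorithm~\ref{alg:descente_n} (resp.\ Algorithm~\ref{alg:umaxn}, Algorithm~\ref{alg:umaxns}) executes at least $2\left|\etats\right|$ iterations on $s$, unless it exits earlier because $r\left(s\right)=1$.

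Next I would invoke Proposition~\ref{prop:arret_n}: after at most $N=2\left|\etats\right|$ iterations of $\descenten$ (resp.\ $\umaxn$) applied to $s$, we have $r\left(s\right)=1$. Combined with Lemma~\ref{lem:Si-avant-n}, which says the resolution flag never changes once it is set, the \texttt{while} loop therefore terminates with $r\left(s\right)=1$ in every case. This establishes the first half of the conclusion.

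For the second half, observe that the run has produced a game tree $\left(\Spartiel,\actions\right)$ built by $\descenten$ (resp.\ $\umaxn$) from $s$ using the tie-breaking evaluation $\fterminal$, and that $r\left(s\right)=1$. Proposition~\ref{prop:exact_n} then applies directly: the $\maxn$ value $M\left(s\right)$ of $s$ with respect to $\fterminal$ is well-defined and unique, and $\left(c\left(s\right),v\left(s\right)\right)=M\left(s\right)$. This is precisely the claim, so the theorem follows.

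Essentially all of the mathematical content is already carried by the two cited propositions, so I expect no genuine obstacle; the only point needing a word of care is the passage from ``search time $\tau$ large enough'' to ``at least $N$ iterations are executed'', which is exactly the bounded-iteration-time remark above. For completeness of the decision step one may additionally note that, once $r\left(s\right)=1$, the action returned by the relevant selection rule (best action, safest action, or the exploration variant) is consistent with $M\left(s\right)$, since by Proposition~\ref{prop:exact_n} the child maximizing $\left(c\left(s'\right)_{\joueur\left(s\right)},v\left(s'\right)_{\joueur\left(s\right)}\right)$ realizes $M\left(s\right)$.
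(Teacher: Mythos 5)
Your proposal is correct and follows essentially the same route as the paper, which proves this theorem simply by invoking Proposition~\ref{prop:arret_n} (termination with $r\left(s\right)=1$ after at most $2\left|\etats\right|$ iterations) and then Proposition~\ref{prop:exact_n} (exactness of $\left(c\left(s\right),v\left(s\right)\right)$ once $r\left(s\right)=1$). Your extra remarks on converting ``$\tau$ large enough'' into a guaranteed number of iterations and on the persistence of the resolution flag via Lemma~\ref{lem:Si-avant-n} merely make explicit what the paper leaves implicit.
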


\begin{proof}
By Proposition~\ref{prop:arret_n}, then by Proposition~\ref{prop:exact_n}.
\end{proof}

\subsection{Second Multi-player Generalization}

We now introduce the second generalization, which allows to keep the
property $c\left(s\right)=\argmax_{s'\in\actions\left(s\right)}c\left(s'\right)_{\joueur\left(s\right)}$
and to use the additional information $c\left(s\right)$ about unresolved
states to build the partial game tree and decide on the best action
to play.

\subsubsection{Algorithms}

With the second generalization, the two algorithms $\descenten$ and
Unbounded $\maxn$ are analogous to the algorithms of the first generalization
but with several differences described below. At any time $c\left(s\right)=c'\left(s\right)$
(so there is no need to $c'$). In addition, as with this variant
an unresolved state can have a non-zero completion value, it is necessary
to be able to separate a winning resolved state from an unresolved
\textquotedblleft winning\textquotedblright{} state. Thus, the calculation
of the best action consists in choosing the child state $s'$ of $s$
maximizing $\left(r\left(s'\right)\cdot c\left(s'\right),c\left(s'\right),v\left(s'\right)\right)$
(Algorithm~\ref{alg:best_action-var}). In addition, choosing the
safest action then amounts to maximizing $\left(r\left(s'\right)\cdot c\left(s'\right),c\left(s'\right),n\left(s,s'\right),v\left(s'\right)\right)$
(Algorithm~\ref{alg:safe_n_var}). Finally, a state is resolved if
all its children are resolved or if $c\left(s\right)_{\joueur\left(s\right)}=1$,
if $v\left(s\right)_{\joueur\left(s\right)}$ is maximum, and if there
is a solved child $s'$ such that $\left(c\left(s\right),v\left(s\right)\right)=\left(c\left(s'\right),v\left(s'\right)\right)$
(Algorithm~\ref{alg:backup_n-var}). The code of an iteration of
Unbounded $\maxn$ in the context of this variant is given in Algorithm~\ref{alg:umaxn-var}.
The code of an iteration of $\descenten$ in the context of this variant
is given in Algorithm~\ref{alg:descente-n-var}. 

\begin{algorithm}[!bh]
\DontPrintSemicolon\SetAlgoNoEnd

\SetKwFunction{completedbestactionn}{best\_action\_n}\SetKwFunction{completedbestactionndual}{best\_action\_n\_dual}\SetKwProg{myproc}{Function}{}{}

\myproc{\completedbestactionn{$s$, $T$}}{

return ${\displaystyle \argmax_{s'\in\actions\left(s\right)}\left(r\left(s'\right)\cdot c\left(s'\right)_{\joueur\left(s\right)},c\left(s'\right)_{\joueur\left(s\right)},v\left(s'\right)_{\joueur\left(s\right)},n\left(s,s'\right)\right)}$\;

}

\;

\myproc{\completedbestactionndual{$s$, $T$}}{

return ${\displaystyle \argmax_{s'\in\actions\left(s\right)}\left(r\left(s'\right)\cdot c\left(s'\right)_{\joueur\left(s\right)},c\left(s'\right)_{\joueur\left(s\right)},v\left(s'\right)_{\joueur\left(s\right)},-n\left(s,s'\right)\right)}$\;

}

\;

\caption{Best action function for $n$ players (see Section~\ref{subsec:Definition-multi}
for the definitions of symbols).\label{alg:best_action-var}}
\end{algorithm}

\begin{algorithm}[!bh]
\DontPrintSemicolon\SetAlgoNoEnd

\SetKwFunction{backupresolutionn}{backup\_resolution\_n} \SetKwProg{myproc}{Function}{}{}

\myproc{\backupresolutionn{$s$}}{

\eIf{$c\left(s\right)_{\joueur\left(s\right)}=1\et v\left(s\right)_{\joueur\left(s\right)}=\max_{s'\in\liste{s'\in S}{\terminal\left(s'\right)}}\fterminaln\left(s'\right)_{\joueur\left(s\right)}\et\exists\tilde{s}\in\actions\left(s\right)\ c\left(s\right)=c\left(\tilde{s}\right)\et v\left(s\right)=v\left(\tilde{s}\right)\et r\left(\tilde{s}\right)=1$}{return
$1$\;}{return $\minimum{}_{s'\in\actions\left(s\right)}r\left(s'\right)$\;}

}

\;

\protect\protect

\caption{Definition of backup\_resolution\_n($s$), which updates the resolution
value of the state $s$ from its child states. \label{alg:backup_n-var}}
\end{algorithm}

\begin{algorithm}[!bh]
\DontPrintSemicolon\SetAlgoNoEnd

\SetKwFunction{decisioncompletionumaxns}{\emph{$\umaxns$}}\SetKwFunction{time}{time}\SetKwFunction{completedsafestactionn}{safest\_action\_n}\SetKwProg{myproc}{Function}{}{}

\myproc{\completedsafestactionn{$s$, $T$}}{

return ${\displaystyle \argmax_{s'\in\actions\left(s\right)}\left(r\left(s'\right)\cdot c\left(s'\right)_{\joueur\left(s\right)},c\left(s'\right)_{\joueur\left(s\right)},n\left(s,s'\right),v\left(s'\right)_{\joueur\left(s\right)}\right)}$\;

}

\;

\caption{Safest action computation (see Section~\ref{subsec:Definition-multi}
for the definitions of symbols).\label{alg:safe_n_var}}
\end{algorithm}

\begin{algorithm}[!bh]
\DontPrintSemicolon\SetAlgoNoEnd

\SetKwFunction{completionubfmsiteration}{$\umaxn$\_iteration}

\SetKwFunction{backupresolutionn}{backup\_resolution\_n}\SetKwProg{myproc}{Function}{}{}

\myproc{\completionubfmsiteration{$s$, $\Spartiel$, $T$, $\fadaptn$,
$\fterminaln$}}{

\eIf{$\terminal\left(s\right)$}{

$\Spartiel\leftarrow\Spartiel\cup\{s\}$\;

$r\left(s\right),c\left(s\right),v\left(s\right)\leftarrow1,\fbinn\left(s\right),\fterminaln\left(s\right)$

}{

\eIf{$s\notin\Spartiel$}{

$\Spartiel\leftarrow\Spartiel\cup\{s\}$\;

\ForEach{$s'\in\actions\left(s\right)$}{

\eIf{$\terminal\left(s'\right)$}{

$\Spartiel\leftarrow\Spartiel\cup\{s'\}$\;

$r\left(s'\right),c\left(s'\right),v\left(s'\right)\leftarrow1,\fbinn\left(s'\right),\fterminaln\left(s'\right)$\;

}{

\If{$s'\notin\Spartiel$}{

$r\left(s'\right),c\left(s'\right),v\left(s'\right)\leftarrow0,\left(0,\ldots,0\right),\fadaptn\left(s'\right)$\;

}

}

}

}{

$A\leftarrow\liste{s'\in\actions\left(s\right)}{r\left(s'\right)=0}$

\If{$\left|A\right|>0$}{

$s'\leftarrow$ \completedbestactionndual{$s$, $A$}\;

$n(s,s')\leftarrow n(s,s')+1$\;

 \completionubfmsiteration{$s'$, $\Spartiel$, $T$, $\fadaptn$,
$\fterminaln$}}\;

$s'\leftarrow$ \completedbestactionn{$s$, $\actions\left(s\right)$}\;

$c(s),v(s)\leftarrow c\left(s'\right),v\left(s'\right)$\;

$r\left(s\right)\leftarrow$ \backupresolutionn{$s$}\;

}}

}

\;

\protect\protect

\caption{Iteration algorithm of\emph{ $\protect\umaxn$} with completion (see
Section~\ref{subsec:Definition-multi} for the definitions of symbols,
Algorithm~\ref{alg:best_action-var} for the definitions of completed\_best\_action\_n($s$)
and Algorithm~\ref{alg:backup_n-var} for the definitions of backup\_resolution\_n($s$)).
Note: $T=(v,c,r)$.\label{alg:umaxn-var}}
\end{algorithm}

\begin{algorithm}[!bh]
\DontPrintSemicolon\SetAlgoNoEnd

\SetKwFunction{completiondescenteiteration}{$\descenten$\_iteration}

\SetKwFunction{backupresolutionn}{backup\_resolution\_n} \SetKwProg{myproc}{Function}{}{}

\myproc{\completiondescenteiteration{$s$, $\Spartiel$, $T$,
$\fadaptn$, $\fterminaln$}}{

\eIf{$\terminal\left(s\right)$}{

$\Spartiel\leftarrow\Spartiel\cup\{s\}$\;

$r\left(s\right),c\left(s\right),v\left(s\right)\leftarrow1,\fbinn\left(s\right),\fterminaln\left(s\right)$

}{

\If{$s\notin\Spartiel$}{

$\Spartiel\leftarrow\Spartiel\cup\{s\}$\;

\ForEach{$s'\in\actions\left(s\right)$}{

\eIf{$\terminal\left(s'\right)$}{

$\Spartiel\leftarrow\Spartiel\cup\{s'\}$\;

$r\left(s'\right),c\left(s'\right),v\left(s'\right)\leftarrow1,\fbinn\left(s'\right),\fterminaln\left(s'\right)$\;

}{

\If{$s'\notin\Spartiel$}{

$r\left(s'\right),c\left(s'\right),v\left(s'\right)\leftarrow0,\left(0,\ldots,0\right),\fadaptn\left(s'\right)$\;

}

}

}

$s'\leftarrow$ \completedbestactionn{$s$, $\actions\left(s\right)$}\;

$c(s),v(s)\leftarrow c\left(s'\right),v\left(s'\right)$\;

$r\left(s\right)\leftarrow$ \backupresolutionn{$s$}\;

}

\If{$r\left(s\right)=0$}{

$A\leftarrow\liste{s'\in\actions\left(s\right)}{r\left(s'\right)=0}$

$s'\leftarrow$ \completedbestactionndual{$s$, $A$}\;

$n(s,s')\leftarrow n(s,s')+1$\;

\completiondescenteiteration{$s'$, $\Spartiel$, $T$, $\fadaptn$,
$\fterminaln$}\;

$s'\leftarrow$ \completedbestactionn{$s$, $\actions\left(s\right)$}\;

$c(s),v(s)\leftarrow c\left(s'\right),v\left(s'\right)$\;

$r\left(s\right)\leftarrow$ \backupresolutionn{$s$}\;

}

}}

\;

\protect\protect

\caption{Iteration algorithm of\emph{ $\protect\descenten$} with completion
(Section~\ref{subsec:Definition-multi} for the definitions of symbols,
Algorithm~\ref{alg:best_action-var} for the definitions of completed\_best\_action\_n($s$),
and Algorithm~\ref{alg:backup_n-var} for the definitions of backup\_resolution\_n($s$)).
Note: $T=(v,c,r)$.\label{alg:descente-n-var}}
\end{algorithm}

\subsubsection{Proof of Completeness}

We now show that the two algorithms of the second variant are complete.
\begin{lem}
\label{lem:completion-resolution-n-1-1-var}Let $\left(\Spartiel,\actions\right)$
be a game tree built by the algorithm $\umaxn$ or by the algorithm
$\descenten$ from a certain state. Let $s\in\Spartiel$. We have
the following property:
\begin{itemize}
\item if $r\left(s\right)=1$ then either $c\left(s\right)_{\joueur\left(s\right)}=1$
and $v\left(s\right)_{\joueur\left(s\right)}=\max_{s'\in\liste{s'\in S}{\terminal\left(s'\right)}}\fterminaln\left(s'\right)_{\joueur\left(s\right)}$
and there exists $\tilde{s}\in\actions\left(s\right)$ such that $c\left(s\right)=c\left(\tilde{s}\right)$,
$v\left(s\right)=v\left(\tilde{s}\right)$ and $r\left(\tilde{s}\right)=1$
or for all $s'\in\actions\left(s\right)$, $r\left(s'\right)=1$ ;
\item if $c\left(s\right)_{\joueur\left(s\right)}=1$ and $v\left(s\right)_{\joueur\left(s\right)}=\max_{s'\in\liste{s'\in S}{\terminal\left(s'\right)}}\fterminaln\left(s'\right)_{\joueur\left(s\right)}$
and there exists $\tilde{s}\in\actions\left(s\right)$ such that $c\left(s\right)=c\left(\tilde{s}\right)$,
$v\left(s\right)=v\left(\tilde{s}\right)$ and $r\left(\tilde{s}\right)=1$
then $r\left(s\right)=1$ ;
\item if for all $s'\in\actions\left(s\right)$, $r\left(s'\right)=1$ then
$r\left(s\right)=1$.
\end{itemize}
\end{lem}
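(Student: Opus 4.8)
The plan is to reproduce, in this setting, the two-line argument that proved Proposition~\ref{lem:completion-resolution-n-1-1}: the three items are nothing more than a transcription of the branching of backup\_resolution\_n (Algorithm~\ref{alg:backup_n-var}), read together with the fact that a resolved state stays resolved. Indeed, in Algorithms~\ref{alg:umaxn-var} and~\ref{alg:descente-n-var}, whenever a non-terminal $s\in\Spartiel$ is (re)processed the lines $c(s),v(s)\leftarrow c(s'),v(s')$ and $r(s)\leftarrow$ backup\_resolution\_n$(s)$ are executed back-to-back, so $r(s)$ is set to $1$ exactly when the \texttt{if}-test of Algorithm~\ref{alg:backup_n-var} succeeds on the freshly updated values, and to $\min_{s'\in\actions(s)}r(s')$ otherwise. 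That \texttt{if}-test is precisely ``$c(s)_{\joueur(s)}=1$, $v(s)_{\joueur(s)}$ equals the global maximum of $\fterminaln(\cdot)_{\joueur(s)}$ over terminal states, and there is a resolved child $\tilde s$ with $c(s)=c(\tilde s)$ and $v(s)=v(\tilde s)$'', i.e. the first disjunct of item~1 and the hypothesis of item~2; the \texttt{else} branch gives $r(s)=1$ iff all children are resolved, which is item~3 and the second disjunct of item~1. Terminal $s$ are immediate: $r(s)=1$ and $\actions(s)=\varnothing$, so item~1 holds through its (vacuously true) second disjunct, item~3 is vacuous, and the hypothesis of item~2 is never met.

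The only real content is therefore to justify reading those items off at an \emph{arbitrary} snapshot, not just right after $s$ was processed; this needs the variant-2 analogue of Lemma~\ref{lem:Si-avant-n}, namely that once $r(s)=1$ it stays $1$ and every later re-processing of $s$ again takes the \texttt{if}-branch of backup\_resolution\_n. I would prove this by induction over iterations, carrying simultaneously the same statement for the children of $s$. If $s$ is revisited with $r(s)=1$ then either $A=\liste{s'\in\actions(s)}{r(s')=0}=\varnothing$ — all children are resolved and stay so by the induction hypothesis, the recursive call is skipped, and backup\_resolution\_n returns $\min_{s'}r(s')=1$ — or $r(s)$ was obtained through the \texttt{if}-branch, so there is a resolved witness child $\tilde s$ with $c(\tilde s)_{\joueur(s)}=1$ and $v(\tilde s)_{\joueur(s)}$ maximal; its tuple in best\_action\_n is $(r(\tilde s)\,c(\tilde s)_{\joueur(s)},c(\tilde s)_{\joueur(s)},v(\tilde s)_{\joueur(s)},n(s,\tilde s))=(1,1,\text{max},\cdot)$, and no child can beat it (a child exceeding the first coordinate would need to be resolved with $\joueur(s)$-completion $1$, but then its heuristic $\joueur(s)$-coordinate cannot exceed the global maximum, which follows by the same induction since a resolved state's $v$ is the $\fterminaln$-value of a terminal descendant). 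Hence best\_action\_n returns $\tilde s$ or another child $s''$ tied with it on the first three coordinates, so $c(s),v(s)$ are overwritten by the values of a resolved child meeting the $\joueur(s)$-requirements, and backup\_resolution\_n returns $1$ once more with that child as the new witness.

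The step I expect to be the main obstacle is getting this invariant stated correctly, and it is also the reason no tie-breaking hypothesis is imposed here (unlike Lemma~\ref{lem:Si-avant-n} for the first generalization): in the second variant one cannot claim that the pair $(c(s),v(s))$ is literally frozen once $s$ is resolved, because a later call to best\_action\_n may legitimately switch to a different resolved ``winning'' child whose $c$- and $v$-vectors agree with the previous ones only in coordinate $\joueur(s)$. So the quantity propagated by the induction must be the structural invariant of item~1 itself — all children resolved, or existence of a resolved witness with matching full $(c,v)$ and winning, maximal $\joueur(s)$-coordinate — rather than ``the values are unchanged''. Granting that invariant, items~1--3 follow immediately as above; the shared-state/DAG bookkeeping causes no trouble, exactly as in the two-player case (Lemma~\ref{lem:completion-resolution}), since by monotonicity a resolved state is never revisited with children that have regressed.
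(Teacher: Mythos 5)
Your proof is correct and follows the same overall route as the paper: items~1--3 are read off the branching of backup\_resolution\_n (Algorithm~\ref{alg:backup_n-var}) applied to the freshly updated $\left(c\left(s\right),v\left(s\right)\right)$, combined with persistence of resolution. Where you diverge is in the persistence step: you assert that in this second variant one \emph{cannot} claim $\left(c\left(s\right),v\left(s\right)\right)$ is frozen once $r\left(s\right)=1$, and you therefore replace the freezing property by a weaker structural invariant (existence of a resolved winning witness child), proved by an induction that also needs the sub-claim that a resolved state's $v$ is the $\fterminaln$-value of a terminal descendant. That extra machinery is unnecessary, because the situation it guards against never occurs: in Algorithms~\ref{alg:umaxn-var} and~\ref{alg:descente-n-var} the recursive call is only ever made on an element of $A=\liste{s'\in\actions\left(s\right)}{r\left(s'\right)=0}$, and the top-level driver loops only while $r\left(s\right)=0$, so the update lines of a state are never executed again after it is marked resolved; hence $r\left(s\right)$, $c\left(s\right)$ and $v\left(s\right)$ are literally unchanged from that point on, exactly as the paper asserts (this is the verbatim analogue of Lemma~\ref{lem:Si-avant-n}, and tie-breaking plays no role in it). The paper's proof is just this observation plus the case split of backup\_resolution\_n; your version proves the same thing under a weaker hypothesis (allowing hypothetical re-processing of resolved states), which is logically sound but heavier than needed, and your stated reason for rejecting the freezing property is a misreading of the control flow rather than a real feature of the second generalization.
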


\begin{proof}
By definition of the algorithm (in particular by the definition and
by the use of the method backup\_resolution\_n($s$) and because as
soon as we have $r\left(s\right)=1$, $r\left(s\right)$, $v\left(s\right)$
and $c\left(s\right)$ do not change anymore)).
\end{proof}
\begin{prop}
\label{prop:exact_n-var}Let $\fterminaln$ be a tie-breaking terminal
evaluation function. Let $\left(\Spartiel,\actions\right)$ be a game
tree built by $\umaxn$ or $\descenten$ from a certain state using
$\fterminal$. Let $s\in\Spartiel$. 

If $r\left(s\right)=1$ then there exists a unique value $\maxn$
of $s$with respect to $\fterminal$, denoted by $M\left(s\right)$,
and we have $\left(c\left(s\right),v\left(s\right)\right)=M\left(s\right)$.
\end{prop}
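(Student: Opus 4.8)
The plan is to run the same induction as in the proof of Proposition~\ref{prop:exact_n}, over the tree $\left(\Spartiel,\actions\right)$ obtained after $k$ iterations of $\umaxn$ (resp.\ $\descenten$), but replacing Proposition~\ref{lem:completion-resolution-n-1-1} by Lemma~\ref{lem:completion-resolution-n-1-1-var} and using that in this variant the iteration always sets $\left(c\left(s\right),v\left(s\right)\right)=\left(c\left(s^{\star}\right),v\left(s^{\star}\right)\right)$, where $s^{\star}$ is the child returned by best\_action\_n. Fix $s\in\Spartiel$ with $r\left(s\right)=1$ and write $j=\joueur\left(s\right)$. If $\terminal\left(s\right)$, then $\left(c\left(s\right),v\left(s\right)\right)=\left(\fbin\left(s\right),\fterminaln\left(s\right)\right)$, which is the (manifestly unique) value $\maxn$ of $s$, settling the base case.

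For the inductive step assume $\lnot\terminal\left(s\right)$. By Lemma~\ref{lem:completion-resolution-n-1-1-var}, one of two situations occurs: (a) $r\left(s'\right)=1$ for all $s'\in\actions\left(s\right)$; or (b) $c\left(s\right)_{j}=1$, $v\left(s\right)_{j}=\max_{s'\in\liste{s'\in S}{\terminal\left(s'\right)}}\fterminaln\left(s'\right)_{j}$, and there is a resolved child $\tilde{s}$ with $\left(c\left(\tilde{s}\right),v\left(\tilde{s}\right)\right)=\left(c\left(s\right),v\left(s\right)\right)$; moreover, once $r\left(s\right)=1$ the values $r,c,v$ at $s$ and at $\tilde{s}$ stay frozen (as observed in the proof of Lemma~\ref{lem:completion-resolution-n-1-1-var}). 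The pivotal remark, used in both cases, is that $M$ always outputs the pair $\left(\fbin\left(t\right),\fterminaln\left(t\right)\right)$ for some terminal descendant $t$, so the tie-breaking hypothesis on $\fterminaln$ implies: for any states $u,u'$, if $M\left(u\right)_{0,j}=M\left(u'\right)_{0,j}$ and $M\left(u\right)_{1,j}=M\left(u'\right)_{1,j}$ then $M\left(u\right)=M\left(u'\right)$.

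In case (a) the induction hypothesis gives $\left(c\left(s'\right),v\left(s'\right)\right)=M\left(s'\right)$ for each child, so $M\left(s'\right)$ is well defined; since every child is resolved, the leading coordinate $r\left(s'\right)\cdot c\left(s'\right)_{j}$ appearing in best\_action\_n equals $c\left(s'\right)_{j}$, hence $s^{\star}$ maximizes $\left(M\left(s'\right)_{0,j},M\left(s'\right)_{1,j}\right)$ over $\actions\left(s\right)$; by the pivotal remark all such maximizers share a single $M$-value, so $M\left(s\right)$ is well defined and equals $M\left(s^{\star}\right)=\left(c\left(s\right),v\left(s\right)\right)$. In case (b) the induction hypothesis applied to $\tilde{s}$ yields $\left(c\left(s\right),v\left(s\right)\right)=\left(c\left(\tilde{s}\right),v\left(\tilde{s}\right)\right)=M\left(\tilde{s}\right)$; because $M\left(\tilde{s}\right)_{0,j}=c\left(s\right)_{j}=1$ is the maximal possible gain and $M\left(\tilde{s}\right)_{1,j}=v\left(s\right)_{j}$ is the global maximum of $\fterminaln\left(\cdot\right)_{j}$ over terminal states, every child $s'$ of $s$ satisfies either $\left(M\left(s'\right)_{0,j},M\left(s'\right)_{1,j}\right)<\left(M\left(\tilde{s}\right)_{0,j},M\left(\tilde{s}\right)_{1,j}\right)$ lexicographically, or $M\left(s'\right)=M\left(\tilde{s}\right)$ by the pivotal remark; hence $\tilde{s}$ attains the $\argmax$ defining $M\left(s\right)$, every attaining child has $M$-value $M\left(\tilde{s}\right)$, so $M\left(s\right)$ is well defined and $M\left(s\right)=M\left(\tilde{s}\right)=\left(c\left(s\right),v\left(s\right)\right)$.

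I expect case (b) to be the delicate point: the induction hypothesis says nothing about the unresolved children of $s$, so the comparison with $\tilde{s}$ must be carried out entirely through the two maximality facts ($c\left(s\right)_{j}=1$ and $v\left(s\right)_{j}$ globally maximal) and the tie-breaking property, which together rule out any child outperforming $\tilde{s}$ in the $M$-recursion. A second, more routine, point of care is to invoke Lemma~\ref{lem:completion-resolution-n-1-1-var} (and the freezing of $r,c,v$) so that the relations $c\left(s\right)=c\left(\tilde{s}\right)$, $v\left(s\right)=v\left(\tilde{s}\right)$, $r\left(\tilde{s}\right)=1$ secured at the iteration that marked $s$ resolved are still valid in the tree under consideration.
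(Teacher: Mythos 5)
Your proposal is correct and follows essentially the same route as the paper: structural induction with the terminal base case, the case split supplied by Lemma~\ref{lem:completion-resolution-n-1-1-var}, the induction hypothesis applied to all children in the all-resolved case and to the resolved witness child $\tilde{s}$ in the winning case, with the tie-breaking property of $\fterminaln$ ensuring that all maximizers share one $\maxn$ value. Your explicit ``pivotal remark'' (that $M$-values are terminal value tuples, so agreement in the $\joueur\left(s\right)$-components forces full agreement) and your handling of the leading coordinate $r\left(s'\right)\cdot c\left(s'\right)_{\joueur\left(s\right)}$ merely make precise what the paper leaves implicit.
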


\begin{proof}
Let $\left(\Spartiel,\actions\right)$ be a game tree built by $\umaxn$
(resp. $\descenten$) from a certain state (i.e. the algorithm has
been applied $k$ times on that state). We show this property by induction.
Let $s\in\Spartiel$ such that $r\left(s\right)=1$. We first show
that this property holds for terminal states.

Suppose in addition that $\terminal\left(s\right)$ is true. Thus,
$\left(c\left(s\right),v\left(s\right)\right)=\left(\fbin\left(s\right),\fterminaln\left(s\right)\right)$
and therefore $\left(c\left(s\right),v\left(s\right)\right)=\left(\fbin\left(s\right),\fterminaln\left(s\right)\right)=M\left(s\right)$.

We now show this property for non-terminal states: we suppose instead
that $\terminal\left(s\right)$ is false.

Since $r\left(s\right)=1$, we have either for all $s'\in\actions\left(s\right)$,
$r\left(s'\right)=1$ or $c\left(s\right)_{\joueur\left(s\right)}=1$,
$v\left(s\right)_{\joueur\left(s\right)}=\max_{s'\in\liste{s'\in S}{\terminal\left(s'\right)}}\fterminaln\left(s'\right)_{\joueur\left(s\right)}$,
and there exists $\tilde{s}\in\actions\left(s\right)$ such that $c\left(s\right)=c\left(\tilde{s}\right)$,
$v\left(s\right)=v\left(\tilde{s}\right)$ and $r\left(\tilde{s}\right)=1$,
by Lemma~\ref{lem:completion-resolution-n-1-1-var}. 

If for all $s'\in\actions\left(s\right)$, $r\left(s'\right)=1$,
then by induction, we have for all $s'\in\actions\left(s\right)$,
$\left(c\left(s'\right),v\left(s'\right)\right)=M\left(s'\right)$.
But $c\left(s\right)=c\left(\argmax_{s'\in\actions\left(s\right)}\left(c\left(s'\right)_{\joueur\left(s\right)},v\left(s'\right)_{\joueur\left(s\right)}\right)\right)$
and $v\left(s\right)=v\left(\argmax_{s'\in\actions\left(s\right)}\left(c\left(s'\right)_{\joueur\left(s\right)},v\left(s'\right)_{\joueur\left(s\right)}\right)\right)$,
since there is a unique pair $\left(c\left(s'\right),v\left(s'\right)\right)$
maximizing $\left(c\left(s'\right)_{\joueur\left(s\right)},v\left(s'\right)_{\joueur\left(s\right)}\right)$
(as $\fterminaln$ is tie-breaking and that the values of a state
no longer change as soon as it is marked as solved). Therefore $\left(c\left(s\right),v\left(s\right)\right)=M\left(\argmax_{s'\in\actions\left(s\right)}\left(M\left(s'\right)_{0,\joueur\left(s\right)},M\left(s'\right)_{1,\joueur\left(s\right)}\right)\right)$,
hence $\left(c\left(s\right),v\left(s\right)\right)=M\left(s\right)$.

Suppose $c\left(s\right)_{\joueur\left(s\right)}=1$, $v\left(s\right)_{\joueur\left(s\right)}=\max_{s'\in\liste{s'\in S}{\terminal\left(s'\right)}}\fterminaln\left(s'\right)_{\joueur\left(s\right)}$,
and there exists $\tilde{s}\in\actions\left(s\right)$ such that $c\left(s\right)=c\left(\tilde{s}\right)$,
$v\left(s\right)=v\left(\tilde{s}\right)$ and $r\left(\tilde{s}\right)=1$.
By induction, $\left(c\left(\tilde{s}\right),v\left(\tilde{s}\right)\right)=M\left(\tilde{s}\right)$
and therefore $\left(c\left(s\right),v\left(s\right)\right)=M\left(\tilde{s}\right)$.
But, since $M\left(\tilde{s}\right)$ is maximum for the player $\joueur\left(s\right)$,
for all $s'\in\actions\left(s\right)$ either $\left(M\left(\tilde{s}\right)_{0,\joueur\left(s\right)},M\left(\tilde{s}\right)_{1,\joueur\left(s\right)}\right)>\left(M\left(s'\right)_{0,\joueur\left(s\right)},M\left(s'\right)_{1,\joueur\left(s\right)}\right)$
or $M\left(\tilde{s}\right)=M\left(s'\right)$ (as $\fterminaln$
is tie-breaking). Thus, $\left(c\left(s\right),v\left(s\right)\right)=M\left(\argmax_{s'\in\actions\left(s\right)}\left(M\left(s'\right)_{0,\joueur\left(s\right)},M\left(s'\right)_{1,\joueur\left(s\right)}\right)\right)$,
hence $\left(c\left(s\right),v\left(s\right)\right)=M\left(s\right)$.
\end{proof}
\begin{prop}
\label{prop:arret_n-var}Let $S$ be the set of states of a perfect
multi-player game. There exists $N\in\mathbb{N}$ such that after
applying $N$ times the algorithm $\descenten$ (resp. $\umaxn$)
on any state $s\in S$, we have $r\left(s\right)=1$.
\end{prop}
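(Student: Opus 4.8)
The plan is to follow the proof of Proposition~\ref{prop:arret_n} essentially line for line, since the only changes in the second generalization are the shape of backup\_resolution\_n (Algorithm~\ref{alg:backup_n-var}) and the tie-breaking rule inside best\_action\_n, neither of which touches the combinatorial skeleton of the argument. Concretely, I would again take $N=2\left|\etats\right|$ and show that this many iterations always force $r\left(s\right)=1$. First I would dispose of the trivial cases: if $s$ is terminal or already satisfies $r\left(s\right)=1$, then one iteration leaves $r\left(s\right)=1$, since once a state is resolved its value of $r$ never changes (this is exactly the ``first branch only ever returns $1$'' feature of backup\_resolution\_n together with Lemma~\ref{lem:completion-resolution-n-1-1-var}). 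So from now on assume $\lnot\terminal\left(s\right)$ and $r\left(s\right)=0$.

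The heart of the proof is the invariant: \emph{each} iteration of $\descenten$ (resp.\ $\umaxn$) applied to $s$ either inserts into $\Spartiel$ some state of $\etats$ not previously present, or flips some resolution value $r\left(s'\right)$ from $0$ to $1$. This invariant suffices, because $\etats$ is finite, so after at most $2\left|\etats\right|$ iterations we either already have $r\left(s\right)=1$, or $\Spartiel$ contains every descendant of $s$ and every state of $\Spartiel$ is resolved; in the latter case the third item of Lemma~\ref{lem:completion-resolution-n-1-1-var} (all children resolved $\Rightarrow$ the state is resolved), applied by induction from the terminal descendants upward to $s$, gives $r\left(s\right)=1$.

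To prove the invariant I would trace the current state $\tilde{s}$ along the recursion, starting from $\tilde{s}=s$. If $\tilde{s}\notin\Spartiel$ it is added and the iteration has already done its job; otherwise the recursion moves into the dual-best unresolved child. The one point that genuinely needs rechecking for this variant is the implication: if $r\left(\tilde{s}\right)=0$ after the block of the test ``$\tilde{s}\in\Spartiel$'', then some child $\tilde{s}'$ still satisfies $r\left(\tilde{s}'\right)=0$, so that $A=\liste{s'\in\actions\left(\tilde{s}\right)}{r\left(s'\right)=0}$ is nonempty and the recursion has somewhere to go. This holds because the (now more elaborate) guard of backup\_resolution\_n in Algorithm~\ref{alg:backup_n-var} can only \emph{return} the value $1$; whenever it fails, the function returns $\minimum_{s'\in\actions\left(\tilde{s}\right)}r\left(s'\right)$, so $r\left(\tilde{s}\right)=0$ forces that minimum to be $0$. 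From here the reasoning is identical to Propositions~\ref{prop:arret} and~\ref{prop:arret_n}: for $\umaxn$ the recursion can only stop at a state not yet in $\Spartiel$ (which it then adds) or at a state all of whose children are resolved (whose $r$ the final backup\_resolution\_n then raises to $1$); for $\descenten$ it continues until, after the block of the test ``$\tilde{s}\in\Spartiel$'', $\tilde{s}$ becomes resolved, which --- since $\tilde{s}$ is not terminal and was unresolved on entry to that recursion step --- can only happen because $\tilde{s}$ was just inserted into $\Spartiel$. In every case the iteration adds a new state or resolves one.

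The part I expect to require the most care is precisely this interaction with the enriched resolution rule, i.e.\ checking that replacing the old guard by ``$c\left(s\right)_{\joueur\left(s\right)}=1$, $v\left(s\right)_{\joueur\left(s\right)}$ maximal, and some resolved child attains $\left(c\left(s\right),v\left(s\right)\right)$'' neither lets $r\left(\tilde{s}\right)$ remain $0$ while all its children are resolved, nor leaves the recursion with no unresolved child to enter. Once one notes that this branch is a pure shortcut to the value $1$ and never interferes with the fallback $\minimum_{s'}r\left(s'\right)$, the ``add a state or resolve a state'' dichotomy is intact, and the remainder of the proof (finiteness of $\etats$ and the descent/UBFM recursion structure) carries over from the two earlier propositions without any genuinely new idea.
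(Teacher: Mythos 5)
Your proposal is correct and matches the paper's approach: the paper's own proof is simply the one-line remark that the argument is analogous to Proposition~\ref{prop:arret_n}, and you spell out exactly that analogy with $N=2\left|\etats\right|$ and the ``add a new state or resolve a state'' invariant. You also correctly isolate the only variant-specific point --- that the enriched guard in backup\_resolution\_n (Algorithm~\ref{alg:backup_n-var}) is a pure shortcut to the value $1$ and otherwise falls back to $\minimum_{s'\in\actions\left(s\right)}r\left(s'\right)$, so an unresolved state always has an unresolved child to recurse into --- which is precisely what makes the transfer of the earlier proof legitimate.
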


\begin{proof}
The proof is analogous to that of Proposition~\ref{prop:arret_n}.
\end{proof}
\begin{thm}
The algorithm $\descenten$ and the algorithm $\umaxn$ are ``complete'',
i.e. applying $\descenten$ (resp. $\umaxn$) on any state $s\in\etats$
by using a tie-breaking terminal evaluation $\fterminal$, with a
search time $\tau$ large enough, gives $r\left(s\right)=1$ and $\left(c\left(s\right),v\left(s\right)\right)=M\left(s\right)$,
the unique value $\maxn$ of $s$ with respect to $\fterminal$.
\end{thm}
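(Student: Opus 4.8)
The plan is to reduce the statement to the two propositions just established, exactly as was done for the first generalization. The theorem asserts two things --- that the search halts with $r\left(s\right)=1$, and that at that point $\left(c\left(s\right),v\left(s\right)\right)$ equals the unique $\maxn$ value $M\left(s\right)$ --- and I would treat them in turn, so the body of the proof is a two-line chaining of Propositions~\ref{prop:arret_n-var} and~\ref{prop:exact_n-var}.

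First I would invoke Proposition~\ref{prop:arret_n-var}: there is an $N\in\mathbb{N}$ (one may take $N=2\left|\etats\right|$) such that after $N$ iterations of $\descenten$ (resp. $\umaxn$) applied to $s$ one has $r\left(s\right)=1$. Since the driver (analogous to Algorithm~\ref{alg:umaxn}) calls the iteration procedure as long as there is search time left \emph{and} $r\left(s\right)=0$, it suffices to choose $\tau$ large enough that at least $N$ iterations complete; the loop then exits with $r\left(s\right)=1$. Here I would recall the counting argument underlying Proposition~\ref{prop:arret_n-var}, inherited from Proposition~\ref{prop:arret_n}: under the hypothesis that $s$ is neither terminal nor already resolved, every iteration either adds to $\Spartiel$ a state of $\etats$ not previously present or changes some $r\left(s'\right)$ from $0$ to $1$; because $\etats$ is finite this can occur at most $2\left|\etats\right|$ times, and once all descendants of $s$ have been inserted, backup\_resolution\_n forces $r\left(s\right)=1$ by induction.

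Second, with $r\left(s\right)=1$ secured, I would apply Proposition~\ref{prop:exact_n-var}: since $\fterminal$ is tie-breaking, the $\maxn$ value of $s$ with respect to $\fterminal$ is well defined and unique, call it $M\left(s\right)$, and $\left(c\left(s\right),v\left(s\right)\right)=M\left(s\right)$. Chaining the two steps yields the claim. I would also remark that the reported action is then consistent with $M\left(s\right)$, since $M\left(\tilde s\right)=\left(c\left(\tilde s\right),v\left(\tilde s\right)\right)$ holds for every resolved child $\tilde s$ and the best-action rule of the second variant breaks ties between equally valued resolved and unresolved children in favour of the resolved one.

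There is essentially no residual difficulty at this level; the mathematical substance lives in Propositions~\ref{prop:exact_n-var} and~\ref{prop:arret_n-var} and the lemmas behind them (notably Lemma~\ref{lem:completion-resolution-n-1-1-var} and the monotonicity of resolved states noted in its proof). The single point I would be careful to state explicitly is that ``$\tau$ large enough'' must be understood relative to the \emph{number} of iterations: Proposition~\ref{prop:arret_n-var} bounds that number independently of $\tau$, so any $\tau$ permitting $2\left|\etats\right|$ iterations suffices, and after termination the while-loop guard $r\left(s\right)=0$ is false, which is precisely the assertion $r\left(s\right)=1$.
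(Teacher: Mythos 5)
Your proposal is correct and follows essentially the same route as the paper: the theorem is obtained by chaining the termination result (Proposition~\ref{prop:arret_n-var}) with the exactness result (Proposition~\ref{prop:exact_n-var}), exactly as the paper intends (its stated proof cites Proposition~\ref{prop:arret_n-var} twice, evidently a typo for this same two-step chaining). Your extra remarks on the iteration bound and the while-loop guard merely make explicit what the paper leaves implicit.
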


\begin{proof}
By Proposition~\ref{prop:arret_n-var}, then by Proposition~\ref{prop:arret_n-var}.
\end{proof}

\bibliographystyle{plain}
\bibliography{proof_bf_dd_dag}

\end{document}